\theoremstyle{definition}
\newtheorem{theorem}{Theorem}
\newtheorem{corollary}[theorem]{Corollary}
\newtheorem{proposition}[theorem]{Proposition}
\newtheorem{lemma}[theorem]{Lemma}
\newtheorem{definition}[theorem]{Definition}
\newtheorem{example}[theorem]{Example}
\newtheorem{notation}[theorem]{Notation}
\newtheorem{remark}[theorem]{Remark}
\newtheorem{question}[theorem]{Question}
\newtheorem{conjecture}[theorem]{Conjecture}
\newcommand{\numberset}{\mathbb}
\newcommand{\F}{\numberset{F}}
\newcommand{\Mat}{\mbox{Mat}}
\newcommand{\mC}{\mathcal{C}}
\newcommand{\mP}{\mathcal{P}}
\newcommand{\mG}{\mathcal{G}}
\newcommand{\mA}{\mathcal{A}}
\newcommand{\mF}{\mathcal{F}}
\begin{document}

\title{Subspace codes from Ferrers diagrams}

\author{Elisa Gorla$^1$ and  Alberto Ravagnani$^2$}
\address{Institut de Math\'{e}matiques
, Universit\'{e} de Neuch\^{a}tel\\Rue
Emile-Argand 11, CH-2000 Neuch\^{a}tel, Switzerland}
\email{$^1$elisa.gorla@unine.ch}
\email{$^2$alberto.ravagnani@unine.ch}

\subjclass[2010]{15A03, 15A99, 15B99}
\keywords{matrix, linear space, rank, subspace code}

\thanks{The authors were partially supported by the Swiss National Science Foundation through grant no. 200021\_150207.}

\begin{abstract}
In this paper we give new constructions of Ferrer diagram rank metric codes, which achieve the largest possible dimension. 
In particular, we prove several cases of a conjecture by T. Etzion and N. Silberstein. 
We also establish a sharp lower bound on the dimension of linear rank metric anticodes with a given profile.
Combining our results with the multilevel construction, we produce examples of subspace codes with the largest known cardinality for the given parameters.
\end{abstract}

\maketitle

\section*{Introduction}\label{intr0}

Network coding is a branch of information theory concerned with data
transmission over noisy networks. A typical scenario where network coding 
techniques may be applied is multicast, i.e., one source
transmitting messages to multiple sinks through a network of intermediate nodes. 
Network coding provides techniques for optimizing the transmission and storage of data 
in digital file distribution, streaming television, peer-to-peer networking, and distributed storage, 
among other applications. 

In \cite{origine2} and \cite{origine3} it is
proved that maximal communication rate in multicasting transmission can be achieved by
allowing the intermediate nodes to perform random linear combinations of the
inputs that they receive, provided that the cardinality of the ground field is large enough.
Motivated by this result, in~\cite{KK1} R. K\"{o}tter and F. R. Kschischang
define a subspace code of constant dimension $k$ and length $n$ over the finite field $\F_q$ as a
collection of $k$-dimensional subspaces of $\F_q^n$. The
novel framework developed in \cite{KK1} also translates errors and
erasures correction in network communications into the problem of finding a
matrix of a given form, satisfying a rank constraint. 

Linear spaces of matrices with rank bounded below by a given $\delta$ and a fixed Ferrers diagram as
shape are introduced in \cite{ref1}, where they are used for constructing subspace codes. Since the 
cardinality of the subspace codes obtained increases with the dimension of the linear spaces of matrices, 
it is natural to ask what the maximum possible dimension of such linear spaces is. In
\cite{ref1} T. Etzion and N. Silberstein derive an upper bound in
terms of invariants of the Ferrers diagram, and conjecture that the bound is
sharp over finite fields. Some examples of Ferrers diagrams that provide evidence 
for the conjecture may be found in \cite{ref1} and \cite{alns}. 
In this paper, we establish the conjecture in several cases, including the most relevant case of ``large''
Ferrer diagrams (see Theorem~\ref{teo1} for a precise statement).

We also study the natural dual problem of linear spaces of matrices with rank bounded above by $\delta-1$
and a given profile $\mathcal{P}$ as shape (see Section~\ref{minore} for the definition of profile). 
Such spaces also appear in the literature under the name of linear anticodes. 
In this paper, we determine the largest possible dimension for any profile and over any field, 
and as a consequence we obtain an upper bound on the dimension of a linear space of matrices with rank bounded below by $\delta$
and a given profile $\mathcal{P}$ as shape.

Finally, we discuss how to apply our results to construct subspace codes over $\F_q$ of the largest 
known cardinality, for many choices of the parameters and for arbitrary $q$. 
Notice that the network coding scheme proposed in \cite{origine3} 
(which inspires the definition of subspace code) asymptotically achieves the maximum 
communication rate only for $q$ large enough, from which our interest in finding 
constructions for all values of $q$.
In addition, we show in an example that using constant weight lexicodes in the multilevel construction 
of~\cite{ref1} may not the best choice, in contrast to what is suggested in previous works.

The paper is organized as follows. In Section \ref{prel} and \ref{cose} we
give some preliminary definitions and results.  In Section \ref{varicasi} we prove
several relevant cases of the conjecture by T. Ezion and N. Silberstein.  Section \ref{minore} is concerned
with the study of the corresponding linear anticodes, for which we compute the maximum dimension for all
possible shapes and over any field. Finally, in Section \ref{esempibelli} we
show how the results from Section~\ref{varicasi} together with the multilevel 
construction of~\cite{ref1} provide new lower bounds on the size of subspace codes. 

New constructions of maximum dimensional Ferrer rank metric codes were obtained independently 
by A. Wachter-Zeh and T. Etzion in~\cite{tuvi}. In particular, their Construction~1 is the same as the 
construction which appears in the proof of our Theorem~\ref{teo3}.

\section{Preliminaries} \label{prel}

Throughout this paper, we denote by $[k]$ the set $\{1,\ldots,k\}$.
A set of $k \times m$ matrices over a field is said to be  a
$\underline{\delta}$-space
(resp., a $\overline{\delta}$-space) if it is a linear space and every non-zero
element of $V$ has rank at
least $\delta$ (resp., at most $\delta$). In this paper we study linear
spaces of matrices whose shape is a Ferrers diagram in the sense of \cite{ref1}.
For the convenience of the reader, we recall the definitions and results that we will use.

\begin{definition} \label{ferr}
 Given positive integers $k$ and $m$, a \textbf{Ferrers diagram} $\mF$ of \textbf{size}
$k \times m$ is a subset of $[k] \times [m]$ with the following properties:
\begin{enumerate}
  \item if $(i,j) \in \mF$ and $i > 1$, then $(i-1,j) \in \mF$,
 \item if $(i,j) \in \mF$ and $j < m$, then $(i,j+1) \in \mF$.
\end{enumerate}

For any $1 \le i \le k$, the $i$-th \textbf{row} of $\mF$ is the set of $(i,j)\in\mF$ with $j
\in [m]$. Similarly, for any $1 \le j \le m$ the
$j$-th \textbf{column} of $\mF$ is the set of $(i,j)\in\mF$ with $i\in [k]$.
\end{definition}

Notice that we do not require $(1,1)\in\mF$ or $(k,m)\in\mF$.

\begin{notation}
 We often identify a Ferrers diagram $\mF$ with the
cardinalities of its
rows. Indeed, given positive integers $m$, $k$ and $m
\ge r_1 \ge r_2
\ge
\cdots \ge r_k \ge 0$, there exists a unique Ferrers diagram $\mF$ of size $k
\times m$  such that the $i$-th row of $\mF$ has cardinality $r_i$ for any
$1 \le i \le k$. 
In this case we write $\mF=[r_1,...,r_k]$.
\end{notation}

\begin{remark} \label{caratt}
 Let $\mF=[r_1,...,r_k]$ and $\mF'=[r_1',...,r_k']$ be Ferrers diagrams of size
$k\times m$. We have $\mF' \subseteq \mF$ if and only if $r_i' \le r_i$ for all
$i=1,...,k$.
\end{remark}

Ferrers diagrams may be graphically represented as rows of
right-justified dots of decreasing cardinalities.  If
$\mF=[r_1,...,r_k]$, the first row of the graphical
representation of $\mF$ contains $r_1$ dots, the second row $r_2$ dots, and so on.

\begin{example} \label{ex1}
 Let $\mF:=[6,3,2,2]$ be a Ferrers diagram of size $6 \times 4$. The graphical
representation of $\mF$ is as follows:
$$\begin{array}{cccccc} \mbox{\textbullet} & \mbox{\textbullet} &
\mbox{\textbullet} & \mbox{\textbullet} & \mbox{\textbullet} &
\mbox{\textbullet} \\

 & & & \mbox{\textbullet} & \mbox{\textbullet} & \mbox{\textbullet} \\
 
 & & & &  \mbox{\textbullet} & \mbox{\textbullet} \\

 & & & & \mbox{\textbullet} &  \mbox{\textbullet}
\end{array}$$
\end{example}

\begin{definition}
 Let $M=(M_{i,j})$ be a $k \times m$ matrix. The \textbf{support} of $M$ is
  the set of its non-zero entries, i.e., $\mbox{supp}(M):=\{
(i,j) \in [k] \times [m] | M_{i,j} \neq 0 \}$.
 Let $\mF$ be a Ferrers diagram of size $k \times m$. We say that $M$ has
\textbf{shape} $\mF$ if $\mbox{supp}(M) \subseteq \mF$. 
\end{definition}

\begin{example}
 The two $4 \times 6$  matrices over $\F_2$
 $$\begin{bmatrix} 1 & 0 & 0 & 1 & 1 & 1 \\ 0 & 0 & 0 & 0 & 1 & 1 \\ 0 & 0 & 0 &
0  & 1 & 1  \\  0 & 0 & 0 & 0 & 1 & 1  \end{bmatrix}, \ \ \ \ \begin{bmatrix} 0
& 0 & 0 & 1 & 1 & 0 \\ 0 & 0 & 0 & 1 & 0 & 0 \\ 0 & 0 & 0 &
0  & 0 & 1  \\  0 & 0 & 0 & 0 & 1 & 0  \end{bmatrix}$$
have shape $\mF:=[6,3,2,2]$.
\end{example}

\begin{notation}
Fix a Ferrer diagram $\mF$. The set of matrices with entries in a field $\F$ which have shape $\mF$ form
a $|\mF|$-dimensional $\F$-vector space, which we denote by $\F[\mF]$. Equivalently, $$\F[\mF]=\{M\in\Mat_{k\times m}(\F)\mid \mbox{supp}(M)\subseteq\mF\}.$$
\end{notation}

A main open problem from \cite[Section VI]{ref1} is the following.

\begin{question} \label{que}
 Given integers $1 \le \delta \le k \le m$ and a Ferrers diagram $\mF$ of
size $k \times m$, what is the largest possible dimension of a
$\underline{\delta}$-space of $k \times m$ matrices with shape $\mF$ and
entries in a finite field $\F_q$?
\end{question}

\begin{remark} \label{transp}
Up to a transposition, the assumption $k \le m$ in Question \ref{que}
is not restrictive. In the sequel we always work with Ferrers diagrams of size
$k \times m$ with $k\le m$. This is also the relevant case for network coding
applications.
\end{remark}

Notice that Question \ref{que} makes sense over any field $\F$. We will
show in Remark \ref{ac} that the answer depends on the choice of $\F$. 
We denote by $$\mbox{MaxDim}_{\underline{\delta}}(\mF,\F)=\max\{\dim V\mid V\subseteq\F[\mF] \mbox{ is a } \underline{\delta}\mbox{-space}\}$$
the largest possible dimension of a $\underline{\delta}$-space of $k
\times m$ matrices with shape $\mF$ and entries in $\F$.

\begin{notation} \label{notazT}
Given integers $1 \le \delta \le k \le m$, $0 \le i \le \delta-1$, and a
Ferrers diagram $\mF$ of size $k\times m$, we denote by $T_\delta(\mF,i)$ the
cardinality of the set obtained from $\mF$ by removing the
topmost $i$ rows and the rightmost $\delta-i-1$ columns. Moreover, we set
$$T_\delta(\mF):=\min_{0 \le i \le \delta -1} T_\delta(\mF,i).$$
One always has $T_1(\mF)=|\mF|=T_1(\mF,0)$.
\end{notation}

\begin{example}
Let $\mF:=[6,3,2,2]$. We have
$T_4(\mF,0)=3$, $T_4(\mF,1)=1$, $T_4(\mF,2)=2$,
$T_4(\mF,3)=2$. Hence $T_4(\mF)=1$. Similarly one can check that $T_3(\mF)=4$
and $T_2(\mF)=7$.
\end{example}

The following lemma collects some properties which will be useful in the sequel. 
The proof is straightforward.

\begin{lemma} \label{tutto}
Let $\F$ be a field, $\mF$ and $\mF'$ be Ferrers diagrams. Assume that $\mF' \subseteq \mF$. We
have:
\begin{enumerate}
 \item $T_\delta(\mF) \ge T_\delta(\mF')$,
 \item $\mbox{MaxDim}_{\underline{\delta}}(\mF,\F) \ge
\mbox{MaxDim}_{\underline{\delta}}(\mF',\F)$.
\end{enumerate}
\end{lemma}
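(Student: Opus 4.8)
The plan is to prove both parts directly from the definitions, using the monotonicity established in Remark~\ref{caratt}.

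\medskip

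\textbf{Part (1): $T_\delta(\mF) \ge T_\delta(\mF')$.}
First I would show the stronger pointwise statement $T_\delta(\mF,i) \ge T_\delta(\mF',i)$ for every $0 \le i \le \delta-1$, from which the inequality on the minima follows immediately. Fix such an $i$. By Remark~\ref{caratt}, $\mF' \subseteq \mF$ means that $\mF'$ and $\mF$ have the same size $k \times m$ and $r_j' \le r_j$ for all $j$. The set counted by $T_\delta(\mF,i)$ is obtained by deleting the top $i$ rows and the rightmost $\delta-i-1$ columns, and this operation is compatible with inclusion: if $\mF' \subseteq \mF$, then the truncated diagram of $\mF'$ is contained in the truncated diagram of $\mF$ (each deletion is performed on the same set of row/column indices in both cases). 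Hence the cardinality only decreases, giving $T_\delta(\mF,i) \ge T_\delta(\mF',i)$. Taking the minimum over $i$ yields $T_\delta(\mF) = \min_i T_\delta(\mF,i) \ge \min_i T_\delta(\mF',i) = T_\delta(\mF')$.

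\medskip

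\textbf{Part (2): $\mbox{MaxDim}_{\underline{\delta}}(\mF,\F) \ge \mbox{MaxDim}_{\underline{\delta}}(\mF',\F)$.}
Since $\mF' \subseteq \mF$, any matrix with shape $\mF'$ also has shape $\mF$, so $\F[\mF'] \subseteq \F[\mF]$ as $\F$-subspaces of $\Mat_{k \times m}(\F)$. Therefore any $\underline{\delta}$-space $V \subseteq \F[\mF']$ is also a $\underline{\delta}$-space contained in $\F[\mF]$ (the rank condition is a property of the matrices themselves and is unaffected by enlarging the ambient shape). Taking $V$ to be a $\underline{\delta}$-space in $\F[\mF']$ of maximal dimension $\mbox{MaxDim}_{\underline{\delta}}(\mF',\F)$, we obtain a $\underline{\delta}$-space of the same dimension inside $\F[\mF]$, whence $\mbox{MaxDim}_{\underline{\delta}}(\mF,\F) \ge \mbox{MaxDim}_{\underline{\delta}}(\mF',\F)$.

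\medskip

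Neither part presents a real obstacle; the only point requiring a moment's care is the bookkeeping in Part~(1), namely that ``remove the topmost $i$ rows and rightmost $\delta-i-1$ columns'' is a monotone operation with respect to $\subseteq$ — but this is immediate once one observes that both diagrams live in the same $[k]\times[m]$ grid and the rows and columns being removed are indexed identically.
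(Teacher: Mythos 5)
Your proof is correct and is exactly the direct argument the paper has in mind when it says ``the proof is straightforward'' and omits it: part (1) follows from the pointwise inequality $T_\delta(\mF,i)\ge T_\delta(\mF',i)$, and part (2) from the inclusion $\F[\mF']\subseteq\F[\mF]$, which preserves the $\underline{\delta}$-space property. Nothing is missing.
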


The authors of \cite{ref1} prove that $T_\delta(\mF)$ is an upper bound for
$\mbox{MaxDim}_{\underline{\delta}}(\mF,\F)$ for any $\delta$. Moreover, they
conjecture that the bound is attained when the field $\F=\F_q$ is finite, 
for any choice of $\delta$ and $\mF$. Notice that while in
\cite{ref1} the upper bound is stated only for finite fields, the proof 
works over an arbitrary field. Here we state the result in the general form.

\begin{theorem}[\cite{ref1}, Theorem 1]\label{tbound}
 We have
$$\mbox{MaxDim}_{\underline{\delta}}(\mF,\F) \le T_\delta(\mF)$$ for any field
$\F$, any Ferrers diagram $\mF$, and any $\delta\geq 1$.
\end{theorem}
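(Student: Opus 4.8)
The plan is to prove the upper bound $\mbox{MaxDim}_{\underline{\delta}}(\mF,\F) \le T_\delta(\mF)$ by exhibiting, for each index $i$ with $0 \le i \le \delta-1$, a linear projection whose kernel is small and on which any $\underline{\delta}$-space must inject. Fix $i$ and consider the map $\pi_i \colon \F[\mF] \to \F^{T_\delta(\mF,i)}$ that reads off the entries of a matrix in the positions belonging to the sub-diagram $\mF_i$ obtained from $\mF$ by deleting the topmost $i$ rows and the rightmost $\delta-i-1$ columns. By definition of $T_\delta(\mF,i)$ the target has dimension exactly $T_\delta(\mF,i)$, so it suffices to show that $\pi_i$ restricted to any $\underline{\delta}$-space $V$ is injective; then $\dim V \le T_\delta(\mF,i)$, and minimizing over $i$ gives $\dim V \le T_\delta(\mF)$.

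First I would set up the linear-algebra core: if $M \in \F[\mF]$ is a nonzero matrix all of whose entries in the positions of $\mF_i$ vanish, then $M$ has rank at most $\delta-1$, so $M \notin V$ and hence $\ker(\pi_i|_V) = 0$. To see the rank bound, observe that every nonzero entry of such an $M$ lies either in one of the topmost $i$ rows or in one of the rightmost $\delta-i-1$ columns of the ambient $k\times m$ grid (this is precisely the complement, within $\mF$, of $\mF_i$). Therefore the row space of $M$ is spanned by those $i$ top rows together with the standard basis vectors indexed by those $\delta-i-1$ rightmost columns — a total of at most $i + (\delta-i-1) = \delta-1$ vectors. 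Hence $\mbox{rank}(M) \le \delta-1 < \delta$, contradicting $M \in V \setminus \{0\}$ unless $M = 0$.

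Second I would assemble the argument: for every $0 \le i \le \delta-1$ we obtain $\dim V = \dim \pi_i(V) \le T_\delta(\mF,i)$, because $\pi_i|_V$ is an injective linear map into a space of dimension $T_\delta(\mF,i)$. Taking the minimum over $i$ yields $\dim V \le \min_{0 \le i \le \delta-1} T_\delta(\mF,i) = T_\delta(\mF)$. Since $V$ was an arbitrary $\underline{\delta}$-space inside $\F[\mF]$, this proves $\mbox{MaxDim}_{\underline{\delta}}(\mF,\F) \le T_\delta(\mF)$, and the same reasoning works verbatim over any field $\F$, not merely $\F_q$.

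The step I expect to be the main (though still routine) obstacle is the careful bookkeeping in the rank estimate: one must be sure that the positions in $\mF$ not lying in $\mF_i$ really are confined to $i$ rows plus $\delta-i-1$ columns, which uses the right-justification property of Ferrers diagrams (condition (2) in Definition~\ref{ferr}) to guarantee that deleting columns from the right is compatible with the diagram structure, and condition (1) for the rows. A clean way to phrase this without index-chasing is: write $M = M' + M''$ where $M'$ keeps only the entries in the top $i$ rows and $M''$ keeps only the entries in the rightmost $\delta-i-1$ columns among rows $i+1,\dots,k$; then $\mbox{rank}(M') \le i$ and $\mbox{rank}(M'') \le \delta-i-1$, so $\mbox{rank}(M) \le \delta-1$ by subadditivity of rank.
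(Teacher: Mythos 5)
Your proof is correct, and it is essentially the same argument the paper uses when it generalizes this bound to arbitrary profiles in Section~\ref{minore}: your projection $\pi_i$ has kernel $\F[\mF\setminus\mF_i]$, which is exactly the anticode $W$ of Lemma~\ref{maggiore}, so injectivity of $\pi_i|_V$ is just a rephrasing of $V\cap W=\{0\}$ and $V\oplus W\subseteq\F[\mF]$. One small remark: the ``main obstacle'' you flag in your last paragraph is not actually there --- neither Ferrers condition is needed for the rank estimate, since the support of any matrix vanishing on $\mF_i$ is contained in $i$ rows and $\delta-i-1$ columns for an arbitrary profile, which is precisely why the paper's version in Section~\ref{minore} holds for all profiles.
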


\begin{conjecture}[\cite{ref1}, Conjecture 1]\label{tconj}
When $\F=\F_q$ is a finite field, equality holds in Theorem \ref{tbound} 
for any choice of the parameters $q$, $\mF$ and $\delta$.
\end{conjecture}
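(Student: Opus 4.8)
By Theorem~\ref{tbound} the inequality $\mbox{MaxDim}_{\underline{\delta}}(\mF,\F_q)\le T_\delta(\mF)$ already holds, so the task is to prove the reverse inequality by an explicit construction: given $\F_q$, a Ferrers diagram $\mF$ of size $k\times m$ with $k\le m$, and $\delta\ge 1$, I would exhibit a $\underline{\delta}$-space $V\subseteq\F_q[\mF]$ with $\dim_{\F_q}V=T_\delta(\mF)$. Fix an index $i^\ast\in\{0,\dots,\delta-1\}$ realizing the minimum $T_\delta(\mF)=T_\delta(\mF,i^\ast)$, and let $\mF^\ast\subseteq\mF$ be the subdiagram obtained by deleting the topmost $i^\ast$ rows and the rightmost $\delta-1-i^\ast$ columns, so that $|\mF^\ast|=T_\delta(\mF)$. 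The design principle is to let the $|\mF^\ast|$ entries indexed by $\mF^\ast$ vary freely, and to fill in the remaining entries of $\mF$ (those in the deleted rows or columns) by prescribed $\F_q$-linear functions of the free entries, chosen so that every nonzero matrix in $V$ has rank at least $\delta$. Since the parity functions are linear, a matrix in $V$ vanishes if and only if its free part vanishes, so this forces $\dim_{\F_q}V=|\mF^\ast|=T_\delta(\mF)$ as required.

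The mechanism producing the rank guarantee is the theory of maximum rank distance codes. After fixing an $\F_q$-basis of $\F_{q^m}$, a $k\times m$ matrix over $\F_q$ becomes a vector in $\F_{q^m}^k$ whose rank distance equals the $\F_q$-rank of the matrix, and a Gabidulin code, described by $q$-linearized polynomials of bounded $q$-degree, is a $\underline{\delta}$-space of full MRD dimension. The heart of the construction is to intersect such a code with $\F_q[\mF]$ so that the shape constraint cuts the dimension down to exactly $T_\delta(\mF)$ and no further. Since Gabidulin codes exist over every extension $\F_{q^m}$, the approach is available for all $q$, consistent with the conjecture asserting the bound over arbitrary finite fields.

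When $i^\ast$ is extremal the construction reduces to a single family of parities and can be carried out directly. For $i^\ast=0$ only the rightmost $\delta-1$ columns are deleted, and one reads $\mF$ row by row: the free entries lie in the leftmost $m-\delta+1$ columns, while the $\delta-1$ rightmost entries of each row are determined by a Gabidulin relation across that row, forcing every nonzero row, hence the matrix, to have rank at least $\delta$. Dually, for $i^\ast=\delta-1$ only the topmost $\delta-1$ rows are deleted, and one reads $\mF$ column by column, using a Gabidulin relation down each column. These are precisely the ``large'' diagrams, for which $T_\delta(\mF,i)$ is minimized at an endpoint, and the matching constructions settle the conjecture in that range.

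The main obstacle is the interior case $0<i^\ast<\delta-1$, where one must delete the topmost $i^\ast$ rows \emph{and} the rightmost $\delta-1-i^\ast$ columns simultaneously. Here the two parity families, relations down the deleted columns and relations across the deleted rows, must coexist, and a naive superposition of the row-wise and column-wise Gabidulin constructions either over-determines the deleted entries, collapsing $\dim_{\F_q}V$ below $T_\delta(\mF)$, or fails to guarantee rank at least $\delta$ on matrices whose nonzero part straddles both deleted regions. The plan to overcome this is to organize the free coefficients, indexed by the dots of $\mF^\ast$, into a triangular staircase dependency so that the column parities and row parities are imposed in a compatible order, and then to prove the rank bound by showing that any nonzero matrix in $V$ of rank at most $\delta-1$ would violate the MRD property of an associated Gabidulin code after deleting $i^\ast$ rows and $\delta-1-i^\ast$ columns. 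Making this joint constraint simultaneously dimension-optimal and rank-preserving for every shape and every interior $i^\ast$ is the crux on which the full conjecture turns; the extremal analysis above handles the large-diagram range but leaves this interior regime as the essential remaining difficulty.
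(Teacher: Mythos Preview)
The statement you are attempting to prove is labelled \emph{Conjecture} in the paper for a reason: the paper does not prove it. Section~\ref{varicasi} establishes only special cases (Theorem~\ref{teo1} when $r_{\delta-1}\ge k$, Theorem~\ref{teo2} for certain square diagrams with $\delta=k$, Corollary~\ref{teo4} for diagonal-shaped diagrams over large enough fields). There is therefore no ``paper's own proof'' to compare your attempt to, and your proposal should be read as an attack on an open problem rather than as a reconstruction of an existing argument.

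Your proposal is not a proof either, and you acknowledge as much: the final paragraph concedes that the interior case $0<i^\ast<\delta-1$ is ``the essential remaining difficulty'' and offers only a vague plan involving a ``triangular staircase dependency'' with no verification that the two families of parities can be made compatible while preserving both the dimension count and the rank bound. That is precisely the obstruction; naming it is not the same as overcoming it. Moreover, your treatment of the extremal cases already contains an error. For $i^\ast=0$ you write that a Gabidulin relation across each row ``forc[es] every nonzero row, hence the matrix, to have rank at least $\delta$''. A single row of a matrix has rank at most one; a row-wise MDS relation controls Hamming weight, not matrix rank. The correct mechanism is a Gabidulin relation applied to the \emph{columns} viewed as elements of $\F_{q^k}$, and even then the Ferrers shape constraint means the columns are not full-length vectors, so the argument does not go through as stated. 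Finally, ``minimum attained at an endpoint'' is not the same hypothesis as the paper's $r_{\delta-1}\ge k$ in Theorem~\ref{teo1}; the latter is a strictly stronger structural condition that is what actually makes the MRD intersection argument of Proposition~\ref{banale0} and Corollary~\ref{banale} succeed.
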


\section{From unrestricted matrices to matrices of prescribed shape}
\label{cose}

A well-studied case of Question \ref{que} is when $\mF=[k]\times
[m]$. This is the case of \textbf{unrestricted matrices}, solved by
Delsarte in 1978. The same result was established in the context of network
coding by K\"otter and Kschischang.

\begin{theorem}[\cite{del1}, Theorem 5.4 and Theorem 6.3; \cite{KK1}, Theorem 14]\label{rettang}
 Let $1 \le \delta \le k \le m$ be integers. 
We have 
 $$\mbox{MaxDim}_{\underline{\delta}}([k] \times [m],\F_q)= m(k-\delta+1)$$ for any finite
field $\F_q$. In particular, Conjecture~\ref{tconj} holds for $\mF=[k]\times[m]$.
\end{theorem}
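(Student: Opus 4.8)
The plan is to prove the two inequalities separately, the upper bound being essentially free and the lower bound requiring a classical MRD-code construction. For the upper bound, I would invoke Theorem~\ref{tbound}, which reduces matters to computing $T_\delta(\mF)$ for $\mF=[k]\times[m]$. Removing the topmost $i$ rows and the rightmost $\delta-i-1$ columns of the full $k\times m$ rectangle leaves a $(k-i)\times(m-\delta+i+1)$ rectangle, so $T_\delta(\mF,i)=(k-i)(m-\delta+i+1)$. As a function of $i$ this is a concave quadratic (the coefficient of $i^2$ is $-1$), so its minimum over $\{0,1,\ldots,\delta-1\}$ is attained at an endpoint; comparing $T_\delta(\mF,0)=k(m-\delta+1)$ with $T_\delta(\mF,\delta-1)=m(k-\delta+1)$ and using $T_\delta(\mF,0)-T_\delta(\mF,\delta-1)=(m-k)(\delta-1)\ge 0$, one gets $T_\delta(\mF)=m(k-\delta+1)$, hence $\mathrm{MaxDim}_{\underline{\delta}}([k]\times[m],\F_q)\le m(k-\delta+1)$.

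For the lower bound I would exhibit a $\underline{\delta}$-space of exactly this dimension via Gabidulin's construction. Fix a basis of $\F_{q^m}$ over $\F_q$ and use it to identify $\F_{q^m}$ with $\F_q^m$; stacking $k$ such rows identifies $(\F_{q^m})^k$ with $\Mat_{k\times m}(\F_q)=\F_q[\mF]$. Choose $\alpha_1,\ldots,\alpha_k\in\F_{q^m}$ linearly independent over $\F_q$ --- possible precisely because $k\le m$ --- and set $V=\mathrm{span}_{\F_q}(\alpha_1,\ldots,\alpha_k)$, a $k$-dimensional $\F_q$-subspace of $\F_{q^m}$. To each $q$-linearized polynomial $f(x)=\sum_{i=0}^{k-\delta}a_i x^{q^i}$ with $a_i\in\F_{q^m}$ associate the matrix $M_f\in\Mat_{k\times m}(\F_q)$ whose $j$-th row is $f(\alpha_j)$ expressed in the chosen basis. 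Since $f\mapsto M_f$ is $\F_q$-linear, the set $V_\delta:=\{M_f\}$ is an $\F_q$-linear subspace of $\F_q[\mF]$.

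The key step is the rank computation. The $\F_q$-rank of $M_f$ equals $\dim_{\F_q}\mathrm{span}_{\F_q}(f(\alpha_1),\ldots,f(\alpha_k))$, i.e.\ the dimension of the image of the $\F_q$-linear map $f|_V\colon V\to\F_{q^m}$. A nonzero $q$-linearized polynomial of $q$-degree at most $k-\delta$ has at most $q^{k-\delta}$ roots in $\cfq$, and these roots form an $\F_q$-subspace, so $\dim_{\F_q}\ker(f|_V)\le k-\delta$; by rank--nullity, $\mathrm{rank}(M_f)\ge k-(k-\delta)=\delta$ for every $f\ne 0$. The same root bound gives injectivity of $f\mapsto M_f$: if $M_f=0$ then $f$ vanishes on all $q^k$ elements of $V$, forcing $f=0$ since $k-\delta<k$. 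Hence $V_\delta$ is a $\underline{\delta}$-space of $\F_q$-dimension equal to that of $\{(a_0,\ldots,a_{k-\delta}):a_i\in\F_{q^m}\}$, namely $(k-\delta+1)m$. Together with the upper bound this proves equality, and as $\mF=[k]\times[m]$ this is exactly Conjecture~\ref{tconj} in this case.

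The only genuine obstacle is the combination of the rank interpretation of $M_f$ with the root-counting bound for $q$-linearized polynomials; everything else is bookkeeping. I expect the cleanest write-up to isolate the statement ``a nonzero $q$-linearized polynomial of $q$-degree $t$ has an $\F_q$-subspace of roots of $\F_q$-dimension at most $t$'' and to apply it twice: once to bound $\ker(f|_V)$ from above (rank lower bound) and once to obtain injectivity.
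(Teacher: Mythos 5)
Your argument is correct. The paper does not prove Theorem~\ref{rettang} at all --- it is quoted from Delsarte and from K\"otter--Kschischang, with only the remark that Delsarte's construction attains the bound --- and what you supply is precisely the classical proof from those sources: the upper bound by minimizing $T_\delta([k]\times[m],i)=(k-i)(m-\delta+i+1)$ over the endpoints of a concave quadratic, and the lower bound via the Gabidulin-type space of $q$-linearized polynomials of $q$-degree at most $k-\delta$ evaluated on a $k$-dimensional $\F_q$-subspace of $\F_{q^m}$, with the root-counting bound giving both the rank estimate and the injectivity of $f\mapsto M_f$. All steps check out, including the endpoint comparison $T_\delta(\mF,0)-T_\delta(\mF,\delta-1)=(m-k)(\delta-1)\ge 0$.
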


For any choice of the parameters $\delta$, $k$ and $m$, Delsarte
provides in \cite{del1} a
construction of a $\underline{\delta}$-space of $k\times m$ matrices of maximum
dimension $m(k-\delta+1)$ over $\F_q$.
The properties of finite fields play a central role in his argument. It is interesting to notice that the answer to
Question~\ref{que} (hence the validity of Conjecture~\ref{tconj}) depends on the choice of the field $\F$, while 
the upper bound $T_\delta(\mF)$ does not. To illustrate this phenomenon, we show that 
Theorem~\ref{rettang} is false over an algebraically closed field.

\begin{example} \label{ac}
Let $\F$ be an algebraically closed field. We have $\mbox{MaxDim}_{\underline{k}}(\mF,\F)
\le 1$ for any $k \ge 1$ and any Ferrers diagram $\mF$ of size $k \times k$.
By contradiction, assume that
$A$ and $B$ are linearly independent $k\times k$ invertible matrices over $\F$
that span a $\underline{k}$-space.
Since $\F$ is algebraically closed, the polynomial $\det(\lambda B
+A)\in \F[\lambda]$ has a root,
say $\overline{\lambda}\in\F$. As a consequence, $\overline{\lambda}A+B$
is not invertible. Since $A$ and $B$ span a $\underline{k}$-space, we conclude that
$\overline{\lambda}A+B=0$, a contradiction.
\end{example}

Hence we have shown the following.

\begin{proposition} \label{nota}
Let $\F$ be an algebraically closed field and $\mF=[r_1,\ldots,r_k]$ be a Ferrers diagram of size $k\times k$, such that $r_i\geq k-i+1$ for $i=1,\ldots,k$. Then $\mbox{MaxDim}_{\underline{k}}(\mF,\F)=1$. In particular,
Conjecture~\ref{tconj} and Theorem~\ref{rettang} do not hold over an algebraically closed field.
\end{proposition}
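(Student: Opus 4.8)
The plan is to combine the upper bound already in hand with an explicit construction. By Example~\ref{ac} (which applies to every Ferrers diagram of size $k\times k$), we have $\mbox{MaxDim}_{\underline{k}}(\mF,\F)\le 1$, so it remains only to exhibit a $\underline{k}$-space of $k\times k$ matrices with shape $\mF$ of dimension $1$. Equivalently, I want to produce a single invertible matrix $M$ with $\mbox{supp}(M)\subseteq\mF$: since every nonzero scalar multiple of an invertible matrix is again invertible, hence of rank $k$, the line $\langle M\rangle\subseteq\F[\mF]$ is then a $\underline{k}$-space, and the desired inequality $\mbox{MaxDim}_{\underline{k}}(\mF,\F)\ge 1$ follows.

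The point is to read off the matrix directly from the hypothesis $r_i\ge k-i+1$. By property~(2) of Definition~\ref{ferr}, the $i$-th row of $\mF$ is right-justified inside $[k]\times[k]$, that is, it equals $\{(i,j)\mid k-r_i+1\le j\le k\}$. Hence $(i,i)\in\mF$ if and only if $i\ge k-r_i+1$, i.e. if and only if $r_i\ge k-i+1$; so the hypothesis says exactly that the whole main diagonal $\{(1,1),\ldots,(k,k)\}$ lies in $\mF$. Taking $M=I_k$, the $k\times k$ identity matrix, we get $\mbox{supp}(I_k)\subseteq\mF$, so $I_k\in\F[\mF]$ is invertible and $\langle I_k\rangle$ is the desired $1$-dimensional $\underline{k}$-space. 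Therefore $\mbox{MaxDim}_{\underline{k}}(\mF,\F)=1$.

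For the final assertion, I would specialize to $k\ge 2$ and $\mF=[k]\times[k]=[k,\ldots,k]$, which trivially satisfies $r_i=k\ge k-i+1$ for all $i$. On the one hand, Theorem~\ref{rettang} gives $\mbox{MaxDim}_{\underline{k}}([k]\times[k],\F_q)=k(k-k+1)=k$, and a short computation from Notation~\ref{notazT} gives $T_k([k]\times[k])=\min_{0\le i\le k-1}(k-i)(i+1)=k$ as well, so Conjecture~\ref{tconj} predicts the value $k$. On the other hand, the previous paragraph shows that over the algebraically closed field $\F$ the correct value is $1<k$. Hence neither Theorem~\ref{rettang} nor Conjecture~\ref{tconj} survives the passage to an algebraically closed field.

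I do not expect any real obstacle: the entire content is the translation of the inequality $r_i\ge k-i+1$ into the geometric statement that the diagonal fits inside $\mF$, which makes $I_k$ an admissible matrix, together with the upper bound of Example~\ref{ac}. (It is worth noting, though not needed for the statement, that the hypothesis is sharp: if $r_i<k-i+1$ for some $i$, a Hall-type argument shows that $\mF$ supports no permutation matrix, so every matrix in $\F[\mF]$ is singular and $\mbox{MaxDim}_{\underline{k}}(\mF,\F)=0$ over every field.)
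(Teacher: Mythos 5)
Your proof is correct and matches the paper's (largely implicit) argument: the upper bound is exactly Example~\ref{ac}, and the lower bound via the observation that $r_i\ge k-i+1$ places the main diagonal inside $\mF$, so that $I_k\in\F[\mF]$ spans a $1$-dimensional $\underline{k}$-space, is precisely the detail the paper leaves to the reader. Your counterexample $\mF=[k]\times[k]$ for the final assertion is also the intended one.
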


An easy way to obtain a $\underline{\delta}$-space of $k \times m$
matrices with
a given shape $\mF$ is the following. Take a $\underline{\delta}$-space of
unrestricted $k \times m$ matrices, and select the ones with the appropriate shape $\mF$. 
The dimension of the $\underline{\delta}$-space obtained can be lower bounded as follows.

\begin{proposition} \label{banale0}
Let $1 \le \delta \le k \le m$ be integers, and let $\mF$ be a Ferrers
diagram of size $k \times m$. Then for any field $\F$
$$\mbox{MaxDim}_{\underline{\delta}}(\mF,\F) \ge
\mbox{MaxDim}_{\underline{\delta}}([k]\times [m],\F)-km+|\mF|.$$ In particular, if $\F=\F_q$ we have
$$\mbox{MaxDim}_{\underline{\delta}}(\mF,\F_q) \geq
|\mF|-m(\delta-1).$$
\end{proposition}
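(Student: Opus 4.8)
The plan is to realize $\F[\mF]$ as a coordinate subspace of $\Mat_{k\times m}(\F)$ and intersect it with a maximal $\underline{\delta}$-space of unrestricted matrices. Concretely, let $V\subseteq\Mat_{k\times m}(\F)$ be a $\underline{\delta}$-space with $\dim V=\mbox{MaxDim}_{\underline{\delta}}([k]\times[m],\F)$. Consider the restriction to $V$ of the $\F$-linear projection $\pi\colon\Mat_{k\times m}(\F)\to\Mat_{k\times m}(\F)/\F[\mF]$ that forgets the entries indexed by $\mF$; its kernel is exactly $V\cap\F[\mF]$. Every nonzero matrix in $V\cap\F[\mF]$ has rank at least $\delta$, so $V\cap\F[\mF]$ is a $\underline{\delta}$-space of matrices of shape $\mF$, and hence $\mbox{MaxDim}_{\underline{\delta}}(\mF,\F)\ge\dim(V\cap\F[\mF])$.

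The core of the argument is then the rank–nullity estimate
$$\dim(V\cap\F[\mF])=\dim V-\dim\pi(V)\ge\dim V-\dim\bigl(\Mat_{k\times m}(\F)/\F[\mF]\bigr)=\dim V-(km-|\mF|),$$
where the inequality uses $\pi(V)\subseteq\Mat_{k\times m}(\F)/\F[\mF]$. Substituting $\dim V=\mbox{MaxDim}_{\underline{\delta}}([k]\times[m],\F)$ gives the first displayed inequality. For the second statement, specialize $\F=\F_q$ and apply Theorem \ref{rettang}, which gives $\mbox{MaxDim}_{\underline{\delta}}([k]\times[m],\F_q)=m(k-\delta+1)$; then
$$\mbox{MaxDim}_{\underline{\delta}}(\mF,\F_q)\ge m(k-\delta+1)-km+|\mF|=|\mF|-m(\delta-1),$$
as claimed.

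There is essentially no obstacle here: the statement is a soft linear-algebra bound, and the only inputs are the trivial fact that a linear subspace of a $\underline{\delta}$-space is again a $\underline{\delta}$-space together with rank–nullity for the projection $\pi$. The one point deserving a word of care is that $\F[\mF]$ really is a coordinate subspace of $\Mat_{k\times m}(\F)$ of dimension $|\mF|$ (immediate from the Notation preceding Question \ref{que}), so that $km-|\mF|$ is the correct codimension; once this is noted, the proof is complete.
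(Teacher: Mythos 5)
Your proof is correct and follows essentially the same route as the paper: intersect a maximal $\underline{\delta}$-space of unrestricted matrices with the coordinate subspace $\F[\mF]$ and bound the dimension of the intersection by a codimension count (the paper states the inequality $\dim(V\cap\F[\mF])\ge\dim V+|\mF|-km$ directly, while you derive it via rank--nullity for the quotient projection, which is the same estimate). The specialization to $\F_q$ via Theorem~\ref{rettang} also matches the paper.
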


\begin{proof}
Let $\F[\mF]$ be the $\F$-vector space of $k\times m$ matrices with
entries in $\F$ and shape $\mF$. Clearly, $\dim\F[\mF]=|\mF|$.
Consider a $\underline{\delta}$-space $V$ of $k\times m$ matrices of dimension
$\mbox{MaxDim}_{\underline{\delta}}([k]\times [m],\F)$.  Then 
\begin{equation}\label{intersect}
\dim V \cap \F[\mF]\geq \mbox{MaxDim}_{\underline{\delta}}([k]\times [m],\F)+|\mF|-km.
\end{equation}
If $\F=\F_q$, the inequality follows from (\ref{intersect}) and Theorem~\ref{rettang}.
\end{proof}

We then obtain the following easy consequence of Proposition~\ref{banale0}.

\begin{corollary}[\cite{ref1}, Theorem 2]\label{banale}
Let $1 \le \delta \le k \le m$ be integers, and let $\mF$ be a Ferrers
diagram of size $k \times m$ with $r_{\delta-1}=m$. Then Conjecture ~\ref{tconj} holds.
\end{corollary}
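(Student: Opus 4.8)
The plan is to squeeze $\mathrm{MaxDim}_{\underline{\delta}}(\mF,\F_q)$ between the general upper bound of Theorem~\ref{tbound} and the general lower bound of Proposition~\ref{banale0}, and to observe that under the hypothesis $r_{\delta-1}=m$ these two bounds coincide. No new construction is needed: everything follows from results already in hand.

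First I would recall that by Theorem~\ref{tbound} we have $\mathrm{MaxDim}_{\underline{\delta}}(\mF,\F_q)\le T_\delta(\mF)$, and by (the finite-field part of) Proposition~\ref{banale0} we have $\mathrm{MaxDim}_{\underline{\delta}}(\mF,\F_q)\ge |\mF|-m(\delta-1)$. The content of the corollary is then equivalent to the purely combinatorial inequality $T_\delta(\mF)\le |\mF|-m(\delta-1)$, since combining the three inequalities forces
$$|\mF|-m(\delta-1)\le \mathrm{MaxDim}_{\underline{\delta}}(\mF,\F_q)\le T_\delta(\mF)\le |\mF|-m(\delta-1),$$
hence equality throughout, which is exactly Conjecture~\ref{tconj} for this $\mF$.

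Next I would establish $T_\delta(\mF)\le |\mF|-m(\delta-1)$ by exhibiting a single index realizing it. Since $\mF=[r_1,\ldots,r_k]$ with $r_1\ge r_2\ge\cdots\ge r_k$ and $r_{\delta-1}=m$, the hypothesis forces $r_1=r_2=\cdots=r_{\delta-1}=m$; that is, the topmost $\delta-1$ rows of $\mF$ are complete. Taking $i=\delta-1$ in Notation~\ref{notazT}, we remove the topmost $\delta-1$ rows and the rightmost $\delta-(\delta-1)-1=0$ columns, so $T_\delta(\mF,\delta-1)=|\mF|-m(\delta-1)$. Therefore $T_\delta(\mF)=\min_{0\le i\le\delta-1}T_\delta(\mF,i)\le T_\delta(\mF,\delta-1)=|\mF|-m(\delta-1)$, which is what was needed.

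There is no real obstacle here: the only point to notice is that the index $i=\delta-1$ in the definition of $T_\delta(\mF)$ produces precisely the quantity $|\mF|-m(\delta-1)$ that the ``intersect a maximal unrestricted $\underline{\delta}$-space with $\F_q[\mF]$'' argument of Proposition~\ref{banale0} guarantees as a lower bound. I would close by remarking that this recovers \cite[Theorem~2]{ref1}.
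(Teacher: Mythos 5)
Your argument is correct and is exactly the intended derivation: the paper presents this corollary as an immediate consequence of Proposition~\ref{banale0} combined with the upper bound of Theorem~\ref{tbound}, and your observation that $T_\delta(\mF,\delta-1)=|\mF|-m(\delta-1)$ when the top $\delta-1$ rows are full is the one computation needed to make the two bounds meet. Nothing is missing.
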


\begin{remark} \label{easy} Corollary~\ref{banale}
implies that Conjecture~\ref{tconj} holds for any 
Ferrers diagram, if $\delta=2$.
\end{remark}

\begin{remark}
The dimension of $V\cap \F[\mF]$ depends on the choice of $V$, where $V$ is a
$\underline{\delta}$-space of unrestricted matrices of maximum dimension. Let e.g. $\mF:=[3,2,1]$.
The linear spaces 
$$V_1:= \langle \begin{bmatrix} 1 & 0& 0 \\ 1& 1 & 0 \\ 0 & 1
& 1 \end{bmatrix},  \begin{bmatrix} 0 & 1 & 0 \\ 1& 1 & 1 \\ 1 & 1
& 0 \end{bmatrix},   \begin{bmatrix} 0 & 0 & 1 \\ 0 & 1 & 0 \\ 1 & 0
& 0 \end{bmatrix} \rangle , \ \ \ \ \ \  
V_2:= \langle  \begin{bmatrix} 1 & 0& 0 \\ 0& 1 & 0 \\ 0 & 1
& 1 \end{bmatrix},  \begin{bmatrix} 0 & 1 & 0 \\ 0& 0 & 1 \\ 1 & 1
& 1 \end{bmatrix},   \begin{bmatrix} 0 & 0 & 1 \\ 1 & 1 & 0 \\ 1 & 0
& 1 \end{bmatrix} \rangle$$
are both $\underline{3}$-spaces of unrestricted matrices over $\F_2$ of maximal
dimension $3$.  However we have
$$V_1 \cap \F_2[\mF]= \langle \begin{bmatrix} 1 & 1& 0 \\
0& 1 & 1
\\ 0 & 0
& 1 \end{bmatrix} \rangle \ \ \ \ \mbox{and} \ \ \ \ V_2 \cap
\F_2[\mF]= \{ 0 \} .$$
\end{remark}

\begin{remark}
 It is well-known that the rank distribution of a $\underline{\delta}$-space of maximum dimension of
unrestricted matrices with entries in a finite field is completely determined by $\delta$, $k$, and $m$ 
(see \cite{del1} or \cite{gui}). This is in general not the case for
$\underline{\delta}$-spaces of matrices with prescribed shape and maximum
dimension. For example, let $\mF:=[3,2,1]$. The two 
$\underline{2}$-spaces of matrices over $\F_2$
 $$W_1:= \langle \begin{bmatrix} 1 & 1 & 1 \\ 0 & 1 & 0 \\ 0 & 0 & 0 
\end{bmatrix} ,  \begin{bmatrix} 1 & 0 & 1 \\ 0 & 0 & 1 \\ 0 & 0 & 0 
\end{bmatrix},   \begin{bmatrix} 1 & 1 & 1 \\ 0 & 0 & 0 \\ 0 & 0 & 1 
\end{bmatrix}\rangle, \ \ \ \ \  W_2:= \langle \begin{bmatrix} 1 & 1 & 1 \\ 0 &
1 & 0 \\ 0 & 0 & 0 
\end{bmatrix} ,  \begin{bmatrix} 0 & 1 & 1 \\ 0 & 0 & 1 \\ 0 & 0 & 0 
\end{bmatrix},   \begin{bmatrix} 1 & 0 & 0 \\ 0 & 0 & 0 \\ 0 & 0 & 1 
\end{bmatrix}\rangle$$
have shape $\mF$ and maximum dimension 3. However, they have different rank distributions. 
\end{remark}

\section{Evidence for the conjecture} \label{varicasi}

In this section, we give explicit constructions of $\underline{\delta}$-spaces of matrices with prescribed shapes. 
This allows us to compute the value of $\mbox{MaxDim}_{\underline{\delta}}(\mF,\F)$ for many
choices of $\mF$ and $\F$. As a consequence, we establish several new cases of Conjecture~\ref{tconj}.

\begin{theorem} \label{teo1}
 Let $2 \le \delta \le k \le m$ be integers, and let $\mF=[r_1,...,r_k]$ be a
Ferrers diagram of size $k \times m$. Assume $r_{\delta-1} \ge k$. We have
$$\mbox{MaxDim}_{\underline{\delta}}(\mF,\F_q)=T_\delta(\mF)=\sum_{i=\delta}^k r_i$$ for any
finite field $\F_q$. In particular, Conjecture~\ref{tconj} holds.
\end{theorem}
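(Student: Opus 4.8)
The plan is to trap $\mbox{MaxDim}_{\underline{\delta}}(\mF,\F_q)$ between $\sum_{i=\delta}^k r_i$ from below and $T_\delta(\mF)$ from above, and to note that the trivial inequality $T_\delta(\mF)\le T_\delta(\mF,\delta-1)=\sum_{i=\delta}^k r_i$ then forces all three quantities to coincide; this way one never needs to evaluate $T_\delta(\mF)$ in full. For the upper bound I would invoke Theorem~\ref{tbound}, together with the observation that taking $i=\delta-1$ in the definition of $T_\delta$ (delete the topmost $\delta-1$ rows and no columns) leaves exactly rows $\delta,\ldots,k$ of $\mF$, so $T_\delta(\mF)\le T_\delta(\mF,\delta-1)=\sum_{i=\delta}^k r_i$.

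The real content is the matching lower bound, for which the idea is to pass to a smaller Ferrers diagram on which the estimate of Proposition~\ref{banale0} is already tight. Set $s:=r_{\delta-1}$; by hypothesis $s\ge k$, and $s\ge r_\delta$ since $\mF$ is a Ferrers diagram. Let $\mF':=[\,\underbrace{s,\ldots,s}_{\delta-1},\,r_\delta,\,r_{\delta+1},\ldots,r_k\,]$ be the Ferrers diagram of size $k\times s$ with these row cardinalities; its top $\delta-1$ rows are full. Embedding $\mF'$ in the last $s$ columns of $[k]\times[m]$ one has $r_i'\le r_i$ for every $i$, hence $\mF'\subseteq\mF$ (cf.\ Remark~\ref{caratt}), and Lemma~\ref{tutto}(2) gives $\mbox{MaxDim}_{\underline{\delta}}(\mF,\F_q)\ge\mbox{MaxDim}_{\underline{\delta}}(\mF',\F_q)$. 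Since $\delta\le k\le s$, Proposition~\ref{banale0} applies to $\mF'$ and yields
$$\mbox{MaxDim}_{\underline{\delta}}(\mF',\F_q)\ \ge\ |\mF'|-s(\delta-1)\ =\ \Big(s(\delta-1)+\sum_{i=\delta}^k r_i\Big)-s(\delta-1)\ =\ \sum_{i=\delta}^k r_i.$$
Unwinding Proposition~\ref{banale0}, this is the explicit construction: intersect a Delsarte (MRD) $\underline{\delta}$-space of $k\times s$ matrices with the space of matrices of shape $\mF'$, then place the resulting matrices in the last $s$ columns of a $k\times m$ grid; each of them has shape $\mF$ and rank at least $\delta$.

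Putting the two bounds together, $\sum_{i=\delta}^k r_i\le\mbox{MaxDim}_{\underline{\delta}}(\mF,\F_q)\le T_\delta(\mF)\le\sum_{i=\delta}^k r_i$, so equality holds throughout; in particular $T_\delta(\mF)=\sum_{i=\delta}^k r_i$ and Conjecture~\ref{tconj} holds in this case. I do not expect a genuine obstacle. The one point that needs care is the choice of $\mF'$: the top $\delta-1$ rows have to be truncated to width exactly $r_{\delta-1}$, so that on the one hand they become full (which is precisely when Proposition~\ref{banale0} is sharp) and on the other hand rows $\delta,\ldots,k$ of $\mF$ are kept untouched (so that $|\mF'|-s(\delta-1)$ is exactly $\sum_{i\ge\delta}r_i$). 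The hypothesis $r_{\delta-1}\ge k$ is used exactly here, to ensure that $\mF'$ has admissible size $k\times s$ with $k\le s$, which is needed for Proposition~\ref{banale0} (equivalently, for Delsarte's MRD construction) to apply.
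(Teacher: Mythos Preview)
Your proof is correct and follows essentially the same approach as the paper: both define the same auxiliary diagram $\mF'=[\underbrace{r_{\delta-1},\ldots,r_{\delta-1}}_{\delta-1},r_\delta,\ldots,r_k]$ of size $k\times r_{\delta-1}$, use the MRD/intersection bound (you via Proposition~\ref{banale0}, the paper via its immediate consequence Corollary~\ref{banale}) to obtain a $\underline{\delta}$-space of dimension $\sum_{i=\delta}^k r_i$, and then close the sandwich $\sum_{i=\delta}^k r_i\le\mbox{MaxDim}_{\underline{\delta}}(\mF,\F_q)\le T_\delta(\mF)\le T_\delta(\mF,\delta-1)=\sum_{i=\delta}^k r_i$. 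Your presentation is slightly more streamlined in that you go straight to the final chain rather than first establishing $\mbox{MaxDim}_{\underline{\delta}}(\mF',\F_q)=T_\delta(\mF')$ as an intermediate equality, but this is a cosmetic difference.
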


\begin{proof}
 Define the Ferrers diagram of size $k\times r_{\delta-1}$
$$\mF':=[\underbrace{r_{\delta-1},...,r_{\delta-1}}_{\delta-1},r_\delta,r_{
\delta+1 } , ... , r_k ]\subseteq\mF.$$
Since $r_{\delta-1}
\ge k$, by Corollary~\ref{banale} there exists a
$\underline{\delta}$-space of matrices with shape $\mF'$ and dimension
$|\mF'|-r_{\delta-1}(\delta-1)=\sum_{i=\delta}^{k} r_i$. Hence we have
$$T_\delta(\mF')\geq\mbox{MaxDim}_{\underline{\delta}}(\mF',\F_q) \ge \sum_{i=\delta}^{k} r_i\geq T_\delta(\mF'),$$ 
where the first inequality follows from Theorem~\ref{tbound} and the last from the definition of $T_\delta(\mF')$. Hence
$\mbox{MaxDim}_{\underline{\delta}}(\mF',\F_q) = \sum_{i=\delta}^{k} r_i =
T_\delta(\mF')$. 
Therefore
\begin{equation}\label{equal}
\sum_{i=\delta}^k r_i\geq T_\delta(\mF)\geq \mbox{MaxDim}_{\underline{\delta}}(\mF,\F_q)\geq
\mbox{MaxDim}_{\underline{\delta}}(\mF',\F_q)=\sum_{i=\delta}^k r_i \ ,
\end{equation}
where the first inequality follows from the definition of $T_\delta(\mF)$, the second from
Theorem~\ref{tbound}, and the third from Lemma~\ref{tutto}.
Therefore all the inequalities in (\ref{equal}) are equalities.
\end{proof}

\begin{remark}
To construct applicable subspace codes using the multilevel construction, we usually need Ferrers diagrams with $m \gg k$. 
In addition, the vector spaces of matrices that contribute the most to the cardinality of the resulting subspace code 
correspond to Ferrers diagrams of large cardinality. Hence the case treated in Theorem~\ref{teo1} is most relevant in the applications.
\end{remark}

For some Ferrers diagrams of size $k \times k$, the maximum dimension of a $\underline{\delta}$-space 
of matrices can be lower-bounded as follows. 

\begin{theorem} \label{teo2}
 Let $k \ge 1$ be an integer, and let $\mF$ be a Ferrers diagram of size $k
\times k$. Assume that $k/2 \le T_k(\mF) \le k-1$. We have
$$\mbox{MaxDim}_{\underline{k}}(\mF,\F_q) \ge \max \left\{ 2T_k(\mF)-k+1,
\bigg\lfloor
\frac{k}{2} \bigg\rfloor \right\}.$$
In particular, Conjecture \ref{tconj} holds in the following cases:\begin{itemize} 
\item $\delta=k=m$ even and $T_k(\mF)=k/2$, 
\item $\delta=k=m$ and $T_k(\mF)= k-1$.
\end{itemize}
\end{theorem}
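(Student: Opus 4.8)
The plan is to construct explicitly a $\underline{k}$-space of matrices with shape $\mF$ of the required dimension. Write $t:=T_k(\mF)$; by hypothesis $k/2\le t\le k-1$. Since the bound $T_k(\mF)$ is computed as a minimum over the integers $0\le i\le k-1$ of the quantities $T_k(\mF,i)$, which count the entries of $\mF$ surviving after deleting the top $i$ rows and the rightmost $k-1-i$ columns, the constraint $t\le k-1$ forces the Ferrers diagram to be ``close to'' the full square $[k]\times[k]$: indeed, comparing with Remark~\ref{caratt}, one sees that $r_i\ge k$ for at least the last $t$ rows, i.e.\ $r_i=k$ whenever $i\ge k-t+1$ (possibly after accounting for the indexing), so the bottom $t$ rows of $\mF$ are full. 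This is the key structural observation I would establish first, as a short lemma, and it is the reason the case $\delta=k=m$ is special.

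Once we know the bottom $t$ rows of $\mF$ are complete, the idea is to work inside the full $t\times k$ submatrix formed by these rows together with a suitable block of the remaining $k-t$ rows. I would invoke Delsarte's Theorem~\ref{rettang} (or rather the companion Proposition~\ref{banale0} / Corollary~\ref{banale}) on rectangular subdiagrams: inside the bottom $t$ full rows of $\mF$, regarded as a $t\times k$ rectangular diagram, there is a $\underline{t}$-space of $t\times k$ matrices of dimension $k(t-t+1)=k$; but we need rank $k$, not rank $t$, so this has to be combined with a companion block. The construction I expect to work is a two-part block construction: partition $\{1,\dots,k\}$ into an ``upper'' index set of size $k-t$ and a ``lower'' set of size $t$ (with $t\ge k-t$ since $t\ge k/2$), and build a family of $k\times k$ matrices that in the lower-right $t\times t$ block is an MRD code (a $\underline{t}$-space of full-rank $t\times t$ matrices, available over $\F_q$ by Delsarte), in the lower-left $t\times(k-t)$ block is arbitrary so as to be compatible with $\mF$, and in the upper rows carries an identity-like pattern forcing those rows to be independent of the lower block. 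Tracking ranks through such a block-triangular arrangement gives rank at least $(k-t)+t=k$ on every nonzero element, and a dimension count yields the claimed $2t-k+1$, while the alternative $\lfloor k/2\rfloor$ comes from the coarser construction that simply takes $\lfloor k/2\rfloor$ of the rows of a fixed invertible matrix restricted to $\mF$.

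For the ``in particular'' clauses: when $t=k/2$ (so $k$ even), the maximum $\max\{2t-k+1,\lfloor k/2\rfloor\}=\max\{1,k/2\}=k/2$, which matches the upper bound $T_k(\mF)=k/2$ of Theorem~\ref{tbound}, hence equality and Conjecture~\ref{tconj}. When $t=k-1$, the maximum is $\max\{2(k-1)-k+1,\lfloor k/2\rfloor\}=\max\{k-1,\lfloor k/2\rfloor\}=k-1$ (using $k\ge 2$), again matching $T_k(\mF)=k-1$, so the conjecture holds. These are one-line verifications once the main bound is in place.

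The main obstacle I anticipate is the block construction itself: one must arrange the upper $k-t$ rows so that, for \emph{every} nonzero linear combination, either the contribution of the MRD block in the lower-right corner already forces full rank of the lower $t$ rows (giving rank $t$ there) plus independence from the upper rows, or — when the MRD coefficients vanish — the upper rows alone still provide rank $k$. Making these two cases fit together without overcounting the dimension, and checking that the chosen patterns are genuinely supported on $\mF$ (which is where $t\le k-1$, forcing the bottom rows full, is used), is the delicate part; the rank bookkeeping for block-triangular matrices is routine once the pattern is pinned down. The dimension lower bound $2t-k+1$ should emerge as ($\dim$ of the MRD block contribution)$-$(codimension lost to fitting the upper part), i.e.\ roughly $t-(k-t-1)=2t-k+1$, and verifying this arithmetic is the final step.
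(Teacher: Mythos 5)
Your reduction to a structured subdiagram and your verification of the two ``in particular'' clauses match the paper, but both of your actual constructions have problems. First, a small orientation error: since $r_1\ge\cdots\ge r_k$ and rows are right-justified, the hypothesis that the \emph{first column} of $\mF$ has cardinality at least $t$ (which is $T_k(\mF,0)\ge t$) forces the \emph{top} $t$ rows to be full, not the bottom ones; together with $r_k\ge t$ (from $T_k(\mF,k-1)\ge t$) this gives $\mF\supseteq\mF':=[k,\ldots,k,t,\ldots,t]$ with $t$ copies of $k$, which is the subdiagram the paper works with. More seriously, your source for the $\lfloor k/2\rfloor$ bound --- ``take $\lfloor k/2\rfloor$ of the rows of a fixed invertible matrix restricted to $\mF$'' --- is not a construction of a $\lfloor k/2\rfloor$-dimensional linear space all of whose nonzero elements have rank $k$. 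The paper instead takes bases $\{M_i\}$ of a $\underline{k_1}$-space of $k_1\times k_1$ matrices and $\{N_i\}$ of a $\underline{k_2}$-space of $k_2\times k_2$ matrices (Theorem~\ref{rettang}, $k_1=\lfloor k/2\rfloor$, $k_2=\lceil k/2\rceil$) and spans the block-diagonal matrices $H_i=\mathrm{diag}(M_i,N_i)$, $i=1,\ldots,k_1$; a nonzero combination has both blocks nonzero simultaneously, hence rank $k_1+k_2=k$, and $t\ge k_2$ guarantees shape $\mF'$.

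The deeper gap is the bound $2t-k+1$, where you yourself flag the block construction as ``the delicate part'' and leave it unresolved. Your worry is well founded: with a fixed ``identity-like pattern'' in the upper $k-t$ rows and a varying MRD block below, linear combinations whose upper part cancels destroy full rank, and it is not clear how to repair this while keeping dimension $2t-k+1$. The paper avoids blocks entirely here: for $k\ge t+2$ it takes $W=\mathrm{Span}_{\F_q}\{f_0,\ldots,f_{2t-k}\}$ where $f_i(x)=\alpha^i x$ on $\F_{q^k}=\F_q(\alpha)$. Every nonzero element of $W$ is multiplication by a nonzero field element, hence invertible of rank $k$, the dimension is $2t-k+1$ because $2t-k\le k-2$, and the associated matrices in the basis $\{\alpha^{k-1},\ldots,\alpha,1\}$ are supported on a band of diagonals that fits inside $\mF'$. (The boundary case $k=t+1$ follows directly from Corollary~\ref{banale}.) Without this, or some completed substitute, your argument does not establish the $2t-k+1$ half of the maximum, and hence does not establish the case $T_k(\mF)=k-1$ of the conjecture.
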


\begin{proof}
By definition of $T_k(\mF)$, both the first column and
the $k$-th row of $\mF$ have cardinality at least $t:=T_k(\mF)$. As a
consequence, $\mF$
contains the Ferrers diagram
$\mF':=[\underbrace{k,...,k}_{t},\underbrace{t,...,t}_{k-t}]$. 
Since $\mF'\subseteq\mF$ and $T_k(\mF')=t$, by Lemma~\ref{tutto} it suffices to
prove the thesis for $\mF'$. The graphical
representation of $\mF'$ is:

\begin{center}
\begin{tikzpicture}[decoration=brace]
    \matrix (m) [matrix of math nodes] {
     \mbox{\textbullet} &  \mbox{\textbullet} &  \mbox{\textbullet} & 
\mbox{\textbullet} &  \mbox{\textbullet} &  \mbox{\textbullet}  & 
\mbox{\textbullet}  &  \mbox{\textbullet} \\
        \mbox{\textbullet} &  \mbox{\textbullet} &  \mbox{\textbullet} & 
\mbox{\textbullet} &  \mbox{\textbullet} &  \mbox{\textbullet}  & 
\mbox{\textbullet}  &  \mbox{\textbullet} \\
        \mbox{\textbullet} &  \mbox{\textbullet} &  \mbox{\textbullet} & 
\mbox{\textbullet} &  \mbox{\textbullet} &  \mbox{\textbullet}  & 
\mbox{\textbullet}  &  \mbox{\textbullet} \\
        \mbox{\textbullet} &  \mbox{\textbullet} &  \mbox{\textbullet} & 
\mbox{\textbullet} &  \mbox{\textbullet} &  \mbox{\textbullet}  & 
\mbox{\textbullet} &  \mbox{\textbullet} \\
        \mbox{\textbullet} &  \mbox{\textbullet} &  \mbox{\textbullet} & 
\mbox{\textbullet} &  \mbox{\textbullet} &  \mbox{\textbullet}  & 
\mbox{\textbullet} &  \mbox{\textbullet} \\
  &  &   & 
\mbox{\textbullet} &  \mbox{\textbullet} &  \mbox{\textbullet}  & 
\mbox{\textbullet} &  \mbox{\textbullet} \\
&  &   & 
\mbox{\textbullet} &  \mbox{\textbullet} &  \mbox{\textbullet}  & 
\mbox{\textbullet} &  \mbox{\textbullet} \\
&  &   & 
\mbox{\textbullet} &  \mbox{\textbullet} &  \mbox{\textbullet}  & 
\mbox{\textbullet} &  \mbox{\textbullet} \\
    };
    \draw[decorate,transform canvas={xshift=-0.3em},thick] (m-5-1.south west)
-- node[left=2pt] {$t$} (m-1-1.north west);
    \draw[decorate,transform canvas={yshift=-1.3em},thick] (m-8-8.north east)
-- node[above=-16pt] {$t$} (m-8-4.north west);
\end{tikzpicture}
\end{center}
\vspace{0.5cm}

Let $k_1=\lfloor k/2 \rfloor$ and  $k_2= \lceil k/2\rceil$. We have $t \ge
k_2$ by assumption. 
By Theorem \ref{rettang} there exists a $\underline{k_1}$-space $V_1$
 (resp., a $\underline{k_2}$-space $V_2$) of $k_1 \times k_1$ (resp., $k_2
\times k_2$) matrices with entries in $\F_q$ of dimension $k_1$ (resp., $k_2$).
Let $\{ M_1,...,M_{k_1} \}$ be a basis of $V_1$ and let  $\{ N_1,...,N_{k_2} \}$ be
a basis of $V_2$. The matrices 
$$ H_i:= \begin{bmatrix}
   M_i & 0 \\
   0 & N_i
  \end{bmatrix}, \ \  i=1,...,k_1 $$
span a $\underline{k}$-space of matrices with entries
in $\F_q$ and shape $\mF'$, of dimension $k_1=\lfloor k/2 \rfloor$. 
Therefore $\mbox{MaxDim}_{\underline{k}}(\mF',\F_q) \ge \lfloor
k/2 \rfloor$. 

Let us prove that $\mbox{MaxDim}_{\underline{k}}(\mF',\F_q) \ge 2t-k+1$ . If $k=t+1$, then by Corollary~\ref{banale}
$\mbox{MaxDim}_{\underline{k}}(\mF',\F_q)\ge |\mF'|-k(k-1)=k-1=2t-k+1$. If $k \ge t+2$ , 
let $\{ 1,\alpha,...,\alpha^{k-1}\}$ be an $\F_q$-basis of $\F_{q^k}=\F_q(\alpha)$. 
For $0\leq i \leq 2t-k$ define the $\F_q$-linear map $$\begin{array}{rcl} f_i: \F_{q^k} & \to & \F_{q^k} \\
x & \mapsto & \alpha^ix. 
\end{array}$$
Let $W:=\mbox{Span}_{\F_q} \{f_0,\ldots,f_{2t-k}\} \subseteq
\mbox{Hom}_{\F_q}(\F_{q^k}, \F_{q^k}).$ Since $t<k$, then 
$\dim W = 2t-k+1$, and any $f \in W\setminus\{0\}$ is invertible. 
Moreover, the matrices associated to the elements of $W$ with respect to the basis 
$\{ \alpha^{k-1},\ldots,\alpha,1\}$ and putting the images in the rows have shape $\mF'$. 
In fact, for $0\leq i\leq 2t-k$ we have $f_i(\alpha^{k-j})=\alpha^{k+i-j}$ with $0\leq k+i-j\leq t-1$ for $t+1\leq j\leq k$.
This proves that $\mbox{MaxDim}_{\underline{k}}(\mF',\F_q) \geq\dim W=2t-k+1$.
\end{proof}

\begin{example}
Let $q:=5$, $k:=4$ and $\mF:=[4,4,2,2]$. We apply the first part of the
proof of Theorem~\ref{teo2} to construct a 2-dimensional 
$\underline{4}$-space of shape $\mF$. Let $V=V_1=V_2$ be the vector space generated over $\F_5$ 
by $$\begin{bmatrix} 0 & 1 \\
 3 & 1 \end{bmatrix} \ \mbox{ and } \ \begin{bmatrix} 3 & 1 \\
 3 & 4 \end{bmatrix}.$$ $V$ is a $\underline{2}$-space, hence the vector space generated by the two matrices
$$\begin{bmatrix} 0 & 1 & 0 & 0 \\ 3 & 1 & 0 & 0 \\ 0 & 0 & 0 & 1 \\  0 & 0 & 3
& 1 \end{bmatrix}, \ \  \begin{bmatrix} 3 & 1 & 0 & 0 \\ 3 & 4 & 0 & 0 \\ 0 & 0
& 3 & 1 \\  0 & 0 & 3
& 4 \end{bmatrix} $$
is a 2-dimensional $\underline{4}$-space.
\end{example}

\begin{remark}
 The lower bound of Theorem \ref{teo2} is not sharp for all choices of the parameters. 
Let e.g. $k:=5$, $q:=3$ and $\mF:=[5,5,5,3,3]$. We have $T_5(\mF)=3$, hence $$\max
\left\{ 2T_5(\mF)-5+1,
\lfloor 5/2 \rfloor \right\}=2.$$ On the other hand, the three matrices over
$\F_3$
$$\begin{bmatrix} 1 & 2 & 1 & 0 & 0 \\ 0 & 0 & 0 & 1 & 0 \\ 1 & 0 & 1 & 0 & 0 
\\ 0 & 0 & 0 & 1 & 1 \\ 0 & 0 & 1 & 0 & 0 \end{bmatrix}, \ \ \ \ \begin{bmatrix}
1 & 0 & 1 & 0 & 1 \\ 2 & 0 & 0 & 0 & 0 \\ 0 & 1 & 0 & 0 & 0 \\  0 & 0 & 1 & 1 &
0 \\ 0 & 0 & 0 & 1 & 0   \end{bmatrix}, \ \ \ \ \begin{bmatrix} 0 & 0 & 0 & 1 &
0 \\ 0 & 1 & 0 & 0 & 0 \\  1 & 0 & 0 & 0 & 0 \\ 0 & 0 & 1 & 0 & 0 \\ 0 & 0 & 0 &
0 & 1 \end{bmatrix} $$
span a 3-dimensional $\underline{5}$-space. Hence
$\mbox{MaxDim}_{\underline{5}}(\mF,\F_3)=3$.
\end{remark}

The remainder of this section is concerned with 
Ferrers diagrams with an ``upper triangular'' profile. We will give a
lower-bound on $\mbox{MaxDim}_{\underline{\delta}}(\mF,\F)$
in terms of the lengths of the diagonals of $\mF$, provided that the field 
$\F$ is large enough.
As a corollary, we compute the maximum possible dimension of
$\underline{\delta}$-spaces of upper triangular matrices over
sufficiently large fields, establishing Conjecture \ref{tconj} for some families of
diagrams. Before proving the next theorem, we recall some
elementary results from classical coding theory.

\begin{definition}
A \textbf{linear code} of lenght $n \ge 1$ and dimension $k$ over a field $\F$ is a
$k$-dimensional vector subspace of $\F^n$. The \textbf{weight} of a vector in
$\F^n$ is the number of its non-zero components. The \textbf{minimum distance}
of a non-zero linear code $C \subseteq \F^n$ is the minimum of the weights of 
the elements of $C \setminus \{ 0 \}$.
\end{definition}

\begin{lemma} \label{reed}
 Let $\F$ be a field. For any integers $1 \le \delta \le n$ there exists a
code $C \subseteq \F^n$ of minimum distance $\delta$ and dimension
$n-\delta+1$, provided that $|\F| \ge n-1$. 
\end{lemma}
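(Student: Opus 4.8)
The plan is to recognize the desired code as a maximum distance separable (MDS) code and to produce it via (extended) Reed--Solomon codes. Indeed, a code of dimension $k:=n-\delta+1$ has minimum distance at most $n-k+1=\delta$ by the Singleton bound, so it is enough to construct $C\subseteq\F^n$ of dimension $k$ in which every nonzero vector has weight at least $\delta$; its minimum distance is then automatically $\delta$. The degenerate case $\delta=1$ is handled by $C=\F^n$, so I would assume $\delta\geq 2$, hence $1\le k\le n-1$, and split according to whether $|\F|\geq n$ or $\F=\F_q$ with $q=n-1$ (these are the only possibilities under the hypothesis $|\F|\geq n-1$).

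When $|\F|\geq n$, I would fix distinct elements $a_1,\dots,a_n\in\F$ and take $C$ to be the image of the evaluation map sending a polynomial of degree $<k$ to its vector of values at $a_1,\dots,a_n$. Since a nonzero polynomial of degree $<k\le n$ has fewer than $n$ roots, this map is injective, so $\dim C=k$, and every nonzero codeword has at least $n-(k-1)=\delta$ nonzero coordinates. By the Singleton bound the minimum distance equals $\delta$.

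When $\F=\F_q$ with $q=n-1$, ordinary evaluation codes only reach length $q<n$, so I would use the doubly-extended Reed--Solomon construction: enumerate $\F_q=\{a_1,\dots,a_q\}$ and let $C$ be the image of the map $f\mapsto(f(a_1),\dots,f(a_q),c_f)\in\F_q^{q+1}=\F_q^n$, where $c_f$ is the coefficient of $x^{k-1}$ in $f$ and $f$ ranges over polynomials of degree $<k$ (note $1\le k\le q$ since $2\le\delta\le n$). Injectivity follows because a polynomial of degree $<k\le q$ vanishing at every element of $\F_q$ must be zero. For the weight bound I would distinguish $\deg f=k-1$ (then $c_f\neq 0$ and $f$ has at most $k-1$ roots in $\F_q$, hence $\phi(f)$ has at most $k-1$ zero coordinates) from $\deg f<k-1$ (then $c_f=0$ and $f$ has at most $k-2$ roots, hence $\phi(f)$ again has at most $k-1$ zero coordinates); in both cases the weight is at least $(q+1)-(k-1)=\delta$, and equality of the minimum distance with $\delta$ follows once more from Singleton.

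I expect the only real obstacle to be this boundary case $|\F|=n-1$: the ``point at infinity'' coordinate $c_f$ together with the case split on $\deg f$ is exactly what is needed to push the length from $q$ up to $n$ without losing the MDS property, whereas over larger fields (in particular over infinite fields, where $|\F|\geq n$ automatically) the plain Reed--Solomon code suffices. Everything else is routine.
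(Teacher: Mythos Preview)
Your argument is correct and follows the same route as the paper: for $|\F|\ge n$ both you and the authors use the Reed--Solomon evaluation code at $n$ distinct points, while for the boundary case $|\F|=n-1$ the paper simply cites \cite[Chapter~11, Theorem~9]{MS} (the extended Reed--Solomon/MDS code of length $q+1$ over $\F_q$), whereas you spell out that construction explicitly via the ``coefficient at infinity'' coordinate. The only quibble is terminological: what you build has length $q+1$, which most sources call the (singly) extended Reed--Solomon code rather than the doubly-extended one; the mathematics is unaffected.
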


\begin{proof}
If $|\F|=n-1$ the result follows from \cite[Chapter 11, Theorem 9]{MS}.
Now assume $|\F| \ge n$. 
Let $\alpha_1,...,\alpha_n \in \F$ distinct.
Denote by $\F[x]_{\le n-\delta}$ the $\F$-space of polynomials with
coefficients in $\F$ and degree at most $n-\delta$. The $\F$-linear map 
$\varphi: \F[x]_{\le n-\delta} \to \F^n$ defined by 
$\varphi(f)=(f(\alpha_1),...,f(\alpha_n))$ is injective by the Fundamental Theorem of Algebra. 
The image of $\varphi$ is a code with the expected properties.
\end{proof}

\begin{definition}
 Let $\mF$ be a Ferrers diagram of size $k\times m$. The $r$-th
\textbf{diagonal} of $\mF$ is the set of elements of $\mF$ of the form
$(i,j)$ with $i-j+m=r$. Notice that we enumerate diagonals from right to left.
Similarly, if $M$ is a matrix with shape $\mF$, we define the $r$-th
\textbf{$\mF$-diagonal} of $M$ as the vector with entries
$M_{i,i+m-r}$ such that $(i,i+m-r) \in \mF$. 
\end{definition}

\begin{example}
Let $\mF:=[4,2,2,1]$. The second diagonal of $\mF$ has cardinality two, the
third and
the fourth have cardinality three. Consider the matrix $M$ of shape
$\mF$ given by
$$M:= \begin{bmatrix}
   a & b & c & d \\ 0 & 0 & e & f \\ 0 & 0 & g & h \\ 0 & 0 & 0 & i
  \end{bmatrix}.
$$
The second $\mF$-diagonal of $M$ is $(c,f)$, the third is $(b,e,h)$, and
the
fourth is
$(a,g,i)$.
\end{example}

A similar construction to the one that we use to prove the next theorem appears in~\cite{roth}. 
We thank T. Etzion for bringing this work to our attention.

\begin{theorem} \label{teo3}
 Let $1 \le \delta \le k \le m$ be integers, and let $\mF$ be a Ferrers diagram
of size $k \times m$. Assume that $\mF$ has $n$ diagonals
$D_1,...,D_n$ of cardinality at least $\delta-1$. $D_i$ is the $\alpha_i$-th diagonal of $\mF$, for some $\alpha_1<\ldots<\alpha_n$.
If $|\F|\ge \max_{i=1}^n  |D_i|-1$, then
$$\mbox{MaxDim}_{\underline{\delta}}(\mF,\F) \ge \sum_{i=1}^n
(|D_i|-\delta+1).$$
\end{theorem}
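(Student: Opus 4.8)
The plan is to build a $\underline{\delta}$-space of matrices of shape $\mF$ whose dimension equals $\sum_{i=1}^n(|D_i|-\delta+1)$ by using, on each relevant diagonal $D_i$, the evaluation (Reed--Solomon--type) code supplied by Lemma~\ref{reed}, and then arranging things so that the rank of a nonzero matrix in the space is controlled by the ``top'' nonzero diagonal. Concretely, fix for each $i\in[n]$ a linear code $C_i\subseteq\F^{|D_i|}$ of dimension $|D_i|-\delta+1$ and minimum distance $\delta$; this exists because $|\F|\ge\max_i|D_i|-1$. Build the space $V$ of matrices $M$ of shape $\mF$ that are supported on the union of the diagonals $D_1,\dots,D_n$ and whose $D_i$-diagonal (read as a vector in $\F^{|D_i|}$ via the natural left-to-right ordering) lies in $C_i$, with all other entries of $M$ forced to be zero. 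Then $\dim V=\sum_{i=1}^n\dim C_i=\sum_{i=1}^n(|D_i|-\delta+1)$, and every $M\in V$ has shape $\mF$ by construction, so it remains only to check the rank condition.

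For the rank lower bound, the key observation is that because $\alpha_1<\dots<\alpha_n$ and each $D_i$ has cardinality at least $\delta-1$, the diagonals $D_i$ are ``upper-triangular-like'': the $\alpha_i$-th diagonal consists of entries $M_{j,j+m-\alpha_i}$, and as $\alpha$ increases the diagonal moves down-and-left. Given a nonzero $M\in V$, let $D_s$ be the diagonal with the largest index $\alpha_s$ whose diagonal vector is nonzero; since its vector lies in $C_s$ and $C_s$ has minimum distance $\delta$, that diagonal has at least $\delta$ nonzero entries, say in rows $j_1<\dots<j_\delta$ with corresponding columns $c_t=j_t+m-\alpha_s$ (also strictly increasing). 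I claim the $\delta\times\delta$ submatrix of $M$ on rows $j_1,\dots,j_\delta$ and columns $c_1,\dots,c_\delta$ is (up to reordering) upper triangular with nonzero diagonal, hence invertible: the entry in row $j_t$, column $c_u$ with $c_u>c_t$ lies on a diagonal of index strictly larger than $\alpha_s$ (one computes $i-j+m$ for this position and sees it exceeds $\alpha_s$), and such a position is either outside $\mF$ or carries a zero entry by maximality of $\alpha_s$. Actually it is cleaner to order rows in decreasing order and columns appropriately so that the picked $\delta\times\delta$ submatrix is anti-triangular with nonzero anti-diagonal; either way its determinant is a nonzero product, so $\mathrm{rank}(M)\ge\delta$.

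The main obstacle is precisely the combinatorial bookkeeping in the previous paragraph: one must verify that for a position $(j_t,c_u)$ with $u>t$ that lies \emph{in} $\mF$, the diagonal index $j_t-c_u+m$ is strictly greater than $\alpha_s$, and that no entry of $M$ off the diagonals $D_1,\dots,D_n$ contributes (which holds by construction since those entries are forced to zero), so that the chosen submatrix really is triangular. One also has to be slightly careful that the rows $j_1<\dots<j_\delta$ and columns $c_1<\dots<c_\delta$ are genuinely distinct indices and that all the required positions on the anti-diagonal lie in $\mF$ — this is automatic because $(j_t,c_t)\in D_s\subseteq\mF$ for every $t$. Once the triangularity is established, invertibility of the submatrix is immediate and the theorem follows. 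I would close by noting that the same construction is the one used (independently) in \cite{tuvi} and is related to \cite{roth}.
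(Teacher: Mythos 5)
Your construction is exactly the one in the paper: evaluation codes from Lemma~\ref{reed} on each diagonal of cardinality at least $\delta$, the span of matrices whose $\alpha_i$-th $\mF$-diagonal ranges over $C_i$, the dimension count via the product isomorphism, and the rank bound via the extremal nonzero diagonal. The one point to fix is a flipped inequality in your triangularity check: for a position $(j_t,c_u)$ with $u>t$ the diagonal index is $j_t-c_u+m=\alpha_s-(j_u-j_t)<\alpha_s$, not greater, and such entries need \emph{not} vanish (they may sit on some $D_i$ with $\alpha_i<\alpha_s$). What maximality of $\alpha_s$ actually kills are the positions $(j_t,c_u)$ with $t>u$, whose diagonal index $j_t-j_u+\alpha_s$ exceeds $\alpha_s$; so the $\delta\times\delta$ submatrix on rows $j_1<\dots<j_\delta$ and columns $c_1<\dots<c_\delta$ is \emph{upper} triangular with nonzero diagonal, and the conclusion $\mathrm{rank}(M)\ge\delta$ follows as you intended. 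With that correction your argument coincides with the paper's proof.
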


\begin{proof}
First we notice that the summands corresponding to diagonals of cardinality 
$\delta-1$ give no contribution to the lower bound. Hence we may assume without loss of generality that 
$|D_i| \ge \delta$ for $i=1,...,n$. By Lemma \ref{reed}, for any $i=1,...,n$ there exists a code
$C_i \subseteq \F^{|D_i|}$ of minimum distance $\delta$ and dimension
$|D_i|-\delta+1$.
Given vectors $v_1,...,v_n$ of lengths $|D_1|,...,|D_n|$ respectively, denote
by $M(v_1,...,v_n,\mF)$ the unique $k\times m$ matrix with the following
properties:
\begin{enumerate}
 \item the shape of $M(v_1,...,v_n,\mF)$ is $\mF$,
 \item the vector $v_i$ is the $\alpha_i$-th $\mF$-diagonal of
$M(v_1,...,v_n,\mF)$,
 \item all the remaining entries of $M(v_1,...,v_n,\mF)$ are zero.
\end{enumerate}
We claim that the linear space $$V:=\mbox{Span}_\F \left\{ M(v_1,...,v_n,\mF) \
: \ (v_1,...,v_n)
\in C_1 \times \cdots \times C_n \right\}$$
is a $\underline\delta$-space of $k\times m$ matrices with shape
$\mF$, of dimension $\sum_{i=1}^n (|D_i|-\delta+1)$.  To compute $\dim V$, 
observe that the map $C_1
\times \cdots \times C_n \to V$ given by $(v_1,...,v_n) \mapsto
M(v_1,...,v_n,\mF)$ is an $\F$-isomorphism. Since
$\dim(C_i)=|D_i|-\delta+1$ for all $i$, then
$\dim V=\sum_{i=1}^n (|D_i|-\delta+1)$. It remains to show that an
arbitrary non-zero matrix in $V$ has rank at least
$\delta$. Fix $M \in V \setminus \{ 0 \}$, and let
$r$ denote the maximum integer such that the $r$-th diagonal of $M$ is non-zero.
By definition of $V$, we have $r=\alpha_i$ for some $i$. Since $C_i$ has minimum
distance $\delta$, the $r$-th diagonal of $M$ has at least
$\delta$ non-zero entries. By the maximality of $r$, the entries of $M$
which lie below such diagonal are all zero.  It is easy to see that a
matrix $M$ of this form has rank at least $\delta$.
\end{proof}

\begin{corollary}\label{teo4}
Let $1 \le \delta \le k \le m$ be integers, and let $\mF=[r_1,...,r_k]$ be a Ferrers diagram
of size $k \times m$. Assume $r_i \ge m-i+1$ for $i=1,...,\delta-1$ and 
$r_i \le m-i+1$ for $i=\delta,...,k$. We have
$$\mbox{MaxDim}_{\underline{\delta}}(\mF,\F)=T_\delta(\mF)$$
for any field $\F$ such that $|\F| \ge \max_{i=\delta}^m |D_i|-1$, where $D_i$
denotes the
$i$-th diagonal of $\mF$.
In particular, Conjecture~\ref{tconj} holds. 
\end{corollary}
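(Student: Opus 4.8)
The plan is to derive the equality by squeezing $\mbox{MaxDim}_{\underline{\delta}}(\mF,\F)$ between the upper bound of Theorem~\ref{tbound} and the lower bound produced by the construction of Theorem~\ref{teo3}. Since $\mbox{MaxDim}_{\underline{\delta}}(\mF,\F)\le T_\delta(\mF)$ comes for free, it suffices to exhibit a $\underline{\delta}$-space of shape $\mF$ of dimension $T_\delta(\mF)$, and the candidate will be the one built in Theorem~\ref{teo3} from the codes of Lemma~\ref{reed} placed along the diagonals of $\mF$. Concretely, I would first identify the long diagonals of $\mF$, then check that the field-size hypothesis of the corollary is exactly the one Theorem~\ref{teo3} requires, and finally evaluate the resulting bound. (For $\delta=1$ the statement is immediate, since $T_1(\mF)=|\mF|=\mbox{MaxDim}_{\underline{1}}(\mF,\F)$, so I assume $\delta\ge 2$.)

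Write $D_r$ for the $r$-th diagonal of $\mF$. First I would determine which $D_r$ have cardinality at least $\delta-1$: a short case analysis, using the two halves of the hypothesis, shows these are precisely $D_{\delta-1},D_\delta,\dots,D_m$. Indeed, for $\delta-1\le r\le m$ the condition $r_i\ge m-i+1$ for $i\le\delta-1$ forces every such diagonal to meet all of rows $1,\dots,\delta-1$, so $|D_r|\ge\delta-1$; for $r>m$ the condition $r_i\le m-i+1$ for $i\ge\delta$ forces $D_r$ to miss every row $i\ge\delta$, and it can only meet rows $i$ with $r-m+1\le i\le\delta-1$, so $|D_r|\le m+\delta-1-r<\delta-1$; and for $r<\delta-1$ one has $|D_r|=r<\delta-1$ for length reasons. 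Since moreover $|D_{\delta-1}|=\delta-1\le|D_\delta|$, the largest cardinality among the long diagonals is $\max_{r=\delta}^{m}|D_r|$, so the hypothesis $|\F|\ge\max_{i=\delta}^{m}|D_i|-1$ is exactly what is needed to invoke Theorem~\ref{teo3}.

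Next I would compute the bound it produces. The same case analysis gives, for $\delta\le r\le m$,
$$|D_r|=(\delta-1)+\#\{\,i:\delta\le i\le\min(k,r),\ r_i>r-i\,\},$$
so $|D_r|-\delta+1$ counts exactly the rows $i\ge\delta$ meeting $D_r$, while the $D_{\delta-1}$ term contributes $0$. Swapping the order of summation, $\sum_{r=\delta}^{m}(|D_r|-\delta+1)$ counts the pairs $(i,r)$ with $\delta\le i\le k$ and $i\le r\le i+r_i-1$ (and $i+r_i-1\le m$ by the hypothesis on the thin rows), hence equals $\sum_{i=\delta}^{k}r_i$. On the other hand, deleting the top $\delta-1$ rows and no columns shows $T_\delta(\mF)\le T_\delta(\mF,\delta-1)=\sum_{i=\delta}^{k}r_i$. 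Combining Theorem~\ref{teo3}, Theorem~\ref{tbound}, and this last inequality,
$$\sum_{i=\delta}^{k}r_i\ \le\ \mbox{MaxDim}_{\underline{\delta}}(\mF,\F)\ \le\ T_\delta(\mF)\ \le\ \sum_{i=\delta}^{k}r_i,$$
so equality holds throughout; taking $\F=\F_q$ yields Conjecture~\ref{tconj} for these parameters.

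The only genuinely delicate point is the diagonal count in the second paragraph: one must check carefully that the two inequalities in the hypothesis cut $\mF$ precisely so that the diagonals of cardinality at least $\delta-1$ are exactly $D_{\delta-1},\dots,D_m$ and that each of them runs through all of the first $\delta-1$ rows. Everything afterwards is bookkeeping — the double-counting identity $\sum_r(|D_r|-\delta+1)=\sum_i r_i$ and the trivial evaluation of $T_\delta(\mF,\delta-1)$.
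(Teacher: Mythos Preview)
Your proof is correct and follows essentially the same approach as the paper's: apply Theorem~\ref{teo3} to the long diagonals of $\mF$ to obtain the lower bound, then verify that this bound coincides with $T_\delta(\mF,\delta-1)=\sum_{i=\delta}^k r_i$ and squeeze against Theorem~\ref{tbound}. Your argument is in fact more detailed than the paper's---you explicitly classify all diagonals of cardinality at least $\delta-1$, justify why the field-size hypothesis matches the one in Theorem~\ref{teo3}, and compute the sum by double counting---whereas the paper simply asserts the key identity in one line.
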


\begin{proof}
Since $r_i \ge m-i+1$ for
$i=1,...,\delta-1$, we have $|D_\delta|,...,|D_m| \ge \delta-1$. By Theorem
\ref{teo3}, $\mbox{MaxDim}_{\underline{\delta}}(\mF,\F) \ge \sum_{i=\delta}^m
(|D_i|-\delta+1)$. By
Theorem \ref{tbound} and the definition of $T_\delta(\mF)$, it suffices
to prove that $T_\delta(\mF,\delta-1)=\sum_{i=\delta}^m (|D_i|-\delta+1)$. Since 
$r_i \le m-i+1$ for $i=\delta,...,k$, and $r_i \ge m-i+1$ for
$i=1,...,\delta-1$, when we remove from $\mF$ the first $\delta-1$ rows we
obtain a set of cardinality $\sum_{i=\delta}^m (|D_i|-\delta+1)$, as claimed.
\end{proof}

\begin{corollary} \label{teo5}
Let $1 \le \delta \le k$ be integers. The maximum dimension of a $\delta$-space
of $k \times k$ upper (or lower) triangular matrices over any field $\F$ is
$\binom{k-\delta+2}{2}$, provided that $|\F| \ge k-1$. In particular,
Conjecture~\ref{tconj} holds. 
\end{corollary}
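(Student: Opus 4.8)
The plan is to realize the space of $k\times k$ upper triangular matrices as $\F[\mF]$ for the Ferrers diagram $\mF=[k,k-1,\ldots,1]$ of size $k\times k$ (so $m=k$ and $r_i=k-i+1$ for $i=1,\ldots,k$), and then to apply Corollary~\ref{teo4}. First I would verify the hypotheses of that corollary: since $m=k$ we have $r_i=m-i+1$ for \emph{every} $i\in[k]$, so both requirements $r_i\ge m-i+1$ for $i=1,\ldots,\delta-1$ and $r_i\le m-i+1$ for $i=\delta,\ldots,k$ hold (with equality). Hence Corollary~\ref{teo4} applies and gives $\mbox{MaxDim}_{\underline{\delta}}(\mF,\F)=T_\delta(\mF)$ whenever $|\F|\ge\max_{i=\delta}^{m}|D_i|-1$, where $D_i$ denotes the $i$-th diagonal of $\mF$.

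Next I would compute the two invariants appearing here. For $\mF=[k,k-1,\ldots,1]$ the $r$-th diagonal is $D_r=\{(i,i+k-r):(i,i+k-r)\in\mF\}$; since $\mF$ is the upper triangular shape this condition amounts to $1\le i\le r\le k$, so $|D_r|=r$ for $r=1,\ldots,k$ (and the diagonals lying in the strictly lower triangular region are empty). Therefore $\max_{i=\delta}^{m}|D_i|=|D_k|=k$, so the field condition in Corollary~\ref{teo4} becomes exactly $|\F|\ge k-1$, as in the statement. For $T_\delta(\mF)$ I would reuse the count carried out inside the proof of Corollary~\ref{teo4}, which shows $T_\delta(\mF)=T_\delta(\mF,\delta-1)=\sum_{i=\delta}^{m}(|D_i|-\delta+1)=\sum_{i=\delta}^{k}(i-\delta+1)=\sum_{j=1}^{k-\delta+1}j=\binom{k-\delta+2}{2}$; alternatively this value can be read off directly by deleting the top $\delta-1$ rows of the triangular diagram, which leaves $r_\delta+\cdots+r_k=\binom{k-\delta+2}{2}$ cells. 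Combining, $\mbox{MaxDim}_{\underline{\delta}}(\mF,\F)=\binom{k-\delta+2}{2}$ for $|\F|\ge k-1$, which settles the upper triangular case; the lower triangular case follows immediately by transposition (Remark~\ref{transp}), since transposing preserves linearity, rank, and dimension and carries lower triangular matrices to upper triangular ones. Finally, taking $\F=\F_q$ with $q\ge k-1$, this equality is precisely Conjecture~\ref{tconj} for these parameters.

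I do not expect a genuine obstacle: the whole argument is a direct specialization of Corollary~\ref{teo4}, and the only thing requiring care is the combinatorial bookkeeping — pinning down the Ferrers diagram attached to triangular matrices, the diagonal cardinalities $|D_r|=r$, and the evaluation $T_\delta(\mF)=\binom{k-\delta+2}{2}$. No idea beyond Theorem~\ref{teo3} and Corollary~\ref{teo4} is involved.
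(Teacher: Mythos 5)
Your proposal is correct and follows essentially the same route as the paper: both apply Corollary~\ref{teo4} to the Ferrers diagram $\mF=[k,k-1,\ldots,1]$ and then evaluate $T_\delta(\mF)=\binom{k-\delta+2}{2}$ (the paper checks $T_\delta(\mF,i)$ for every $i$, while you compute only $T_\delta(\mF,\delta-1)$ and cite the identity established inside the proof of Corollary~\ref{teo4}, which is equally valid). Your explicit verification that $|D_r|=r$, so that the field hypothesis specializes to $|\F|\ge k-1$, is a detail the paper leaves implicit but is correct.
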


\begin{proof}
The Ferrers diagram corresponding to upper triangular $k\times k$ matrices is
$\mF:=[k,k-1,...,1]$, which satisfies the assumptions of
Corollary \ref{teo4}. Hence we only need to check that
$T_\delta(\mF)=\binom{k-\delta+2}{2}$. Fix any $0 \le i \le \delta-1$. We have
$$T_\delta(\mF,i)=\sum_{j=k-i+1}^k j \ + \ \sum_{j=1}^{k-i} j \ \
=\sum_{i=1}^{k-\delta+1} j \ = \ \binom{k-\delta+2}{2}.$$
It follows that $T_\delta(\mF)=\binom{k-\delta+2}{2}$.
\end{proof}

\begin{remark}
 The requirement $|\F| \ge k-1$ in the statement of Corollary \ref{teo5} is not
necessary, in general, for the existence of a $\underline\delta$-space
of $k \times k$ upper triangular matrices of dimension
$\binom{k-\delta+2}{2}$. For example, the three upper triangular $4 \times
4$ matrices over $\F_2$
$$ \begin{bmatrix} 0 &  0 & 0 & 1 \\ 0 & 1 & 0 & 0 \\ 0 & 0 & 1 & 0 \\ 0 & 0 &
0 & 0 \end{bmatrix}, \ \ \ 
 \begin{bmatrix} 0 &  1 & 0 & 0 \\ 0 & 0 & 1 & 1 \\ 0 & 0 & 0 & 1 \\ 0 & 0 & 0
& 0\end{bmatrix}, \ \ \ 
 \begin{bmatrix} 1 &  0 & 1 & 0 \\ 0 & 1 & 0 & 1 \\ 0 & 0 & 0 & 0 \\ 0 & 0 & 0
& 1\end{bmatrix}$$
span a $3$-dimensional
$\underline{3}$-space.
\end{remark}


\section{$\overline{\delta}$-spaces of maximum dimension} \label{minore}

The problem stated in Question \ref{que} has the following
natural dual version, in terms of $\overline{\delta}$-spaces of matrices. 
Notice that $\overline{\delta}$-spaces of matrices are by definition linear rank metric anticodes.

\begin{question} \label{que2}
Given integers $1 \le \delta \le k \le m$ and a Ferrers diagram $\mF$ of
size $k \times m$, what is the largest possible dimension of a
$\overline{\delta}$-space of $k \times m$ matrices with shape $\mF$ and
entries in a finite field $\F_q$?
\end{question}

\begin{definition}
Given positive integers $k$ and $m$, define a \textbf{profile} of size $k \times m$
as a subset $\mathcal{P} \subseteq [k]\times [m]$. 
For any $1 \le i \le k$, the $i$-th \textbf{row} of $\mathcal{P}$ is the set of $(i,j)\in\mathcal{P}$ with $j
\in [m]$. Similarly, for any $1 \le j \le m$ the
$j$-th \textbf{column} of $\mathcal{P}$ is the set of $(i,j)\in\mathcal{P}$ with $i\in [k]$.

A $k \times m$ matrix $M$ has \textbf{shape} $\mP$ when $\mbox{supp}(M) \subseteq \mP$. 
\end{definition}

Notice that Ferrers diagrams are examples of profiles. In this section, using an idea from \cite{Mesh},
 we answer the following generalization of Question \ref{que2}.

\begin{question} \label{que3}
 Given integers $1 \le \delta \le k \le m$ and a profile $\mP$ of
size $k \times m$, what is the largest possible dimension of a
$\overline{\delta}$-space of $k \times m$ matrices with shape $\mP$ and
entries in an arbitrary field $\F$?
\end{question}

Let $\F$ be a field and $\mP$ be a profile of size $k \times m$. We denote by
$$\mbox{MaxDim}_{\overline{\delta}}(\mP,\F)$$
the maximum dimension of a $\overline{\delta}$-space of $k \times m$ matrices with
entries in $\F$ and shape $\mP$.

\begin{notation} \label{notazT2}
Let $1 \le \delta \le k \le m$ be integers, and let $\mP$ be a profile of
size $k \times m$. Given subsets $I \subseteq [k]$,
$J \subseteq [m]$ such that $|I|+|J|=\delta-1$, we denote by $T_\delta(\mP,I,J)$ 
the cardinality of the set obtained from $\mP$ by removing the rows of index $i\in I$
and the columns of index $j\in J$. Moreover, we set 
$$T_\delta(\mP):=\min \left\{ T_\delta(\mP,I,J) \ | \ I \subseteq [k], J \subseteq [m] \mbox{ and }
|I|+|J|=\delta-1 \right\}.$$ Finally, recall that a \textbf{line} of a matrix is either a row,
or a column of the matrix.
\end{notation}

\begin{remark} \label{casospeciale}
 When $\mP=\mF$ is a Ferrers diagram, the definition of $T_\delta(\mF)$ given in Notation~\ref{notazT} and the definition of $T_\delta(\mP)$ 
given in Notation~\ref{notazT2} coincide.
\end{remark}

\begin{lemma} \label{maggiore}
 Let $1 \le \delta \le k \le m$ be integers, and let $\mP$ be a profile
of size $k \times m$. We have
$$\mbox{MaxDim}_{\overline{\delta-1}}(\mP,\F) \ge |\mP|-T_{\delta}(\mP)$$
for any field $\F$.
\end{lemma}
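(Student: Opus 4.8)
The plan is to produce a $\overline{\delta-1}$-space of matrices with shape $\mP$ whose dimension is at least $|\mP|-T_\delta(\mP)$ by intersecting $\F[\mP]$ with a large space of matrices of bounded rank. First I would observe that for any fixed subsets $I\subseteq[k]$, $J\subseteq[m]$ with $|I|+|J|=\delta-1$, the set of all $k\times m$ matrices over $\F$ that are supported only on the rows indexed by $I$ and the columns indexed by $J$ — i.e.\ matrices that vanish identically outside those $\delta-1$ lines — forms an $\F$-linear space $L_{I,J}$, every non-zero element of which has rank at most $|I|+|J|=\delta-1$ (a matrix with non-zero entries confined to $|I|$ rows and $|J|$ columns has rank at most $|I|+|J|$). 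Thus $L_{I,J}$ is a $\overline{\delta-1}$-space, and so is $L_{I,J}\cap\F[\mP]$, being a linear subspace of it consisting of matrices with shape $\mP$.

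Next I would compute $\dim(L_{I,J})$ and $\dim(L_{I,J}\cap\F[\mP])$. The space $L_{I,J}$ has dimension $k\cdot m - (k-|I|)(m-|J|)$: it is the set of matrices whose support avoids the sub-rectangle indexed by $([k]\setminus I)\times([m]\setminus J)$, a rectangle of $(k-|I|)(m-|J|)$ cells. Intersecting with $\F[\mP]$, which has dimension $|\mP|$ and is itself a coordinate subspace of $\Mat_{k\times m}(\F)$, we get that $L_{I,J}\cap\F[\mP]$ is again a coordinate subspace: it consists of matrices supported on $\mP$ minus the cells of $\mP$ that lie in the forbidden rectangle. The number of such cells is exactly $|\mP|$ minus the number of cells of $\mP$ lying in rows $I$ or columns $J$, which is precisely $T_\delta(\mP,I,J)$ by Notation~\ref{notazT2}. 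Hence
$$\dim\bigl(L_{I,J}\cap\F[\mP]\bigr)=|\mP|-T_\delta(\mP,I,J).$$

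Finally, choosing $I,J$ to realize the minimum in the definition of $T_\delta(\mP)$, we obtain a $\overline{\delta-1}$-space of shape $\mP$ of dimension $|\mP|-T_\delta(\mP)$, which gives $\mbox{MaxDim}_{\overline{\delta-1}}(\mP,\F)\ge|\mP|-T_\delta(\mP)$ as desired. The argument is essentially a bookkeeping computation and works over any field; I do not anticipate a genuine obstacle, but the one point requiring care is the correct identification of $\dim(L_{I,J}\cap\F[\mP])$ with $|\mP|-T_\delta(\mP,I,J)$ — one must check that removing rows $I$ and columns $J$ from $\mP$ (in the sense of Notation~\ref{notazT2}) counts exactly the cells of $\mP$ that survive in the complementary rectangle, so that the surviving support has the claimed cardinality. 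Everything else (linearity, the rank bound $\mathrm{rank}(M)\le|I|+|J|$ for $M\in L_{I,J}$, and the fact that a coordinate subspace of a coordinate subspace has dimension equal to the number of common coordinates) is immediate.
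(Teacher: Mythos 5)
Your proposal is correct and follows essentially the same route as the paper: the space $L_{I,J}\cap\F[\mP]$ you construct is exactly the paper's $\F[\mP']$ with $\mP'=\{(i,j)\in\mP\mid i\in I\mbox{ or }j\in J\}$, and both arguments rest on the same two observations (a matrix supported on $\delta-1$ lines has rank at most $\delta-1$, and the surviving support has cardinality $|\mP|-T_\delta(\mP,I,J)$). No gaps.
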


\begin{proof}
Choose $I \subseteq [k]$ and $J\subseteq [m]$ such that $|I|+|J|=\delta-1$ and
 $T_\delta(\mP,I,J)=T_\delta(\mP)$.
Let $$\mP'=\{(i,j)\in\mP\mid i\in I \mbox{ or } j\in J\}.$$
Because of the choice of $I$ and $J$,
$|\mP'|=|\mP|-T_\delta(\mP)$. Denote by $\F[\mP']$ the vector space of
$k \times m$ matrices over $\F$ with shape $\mP'$. We have $\dim_\F \F[\mP']=
|\mP'|=|\mP|-T_\delta(\mP)$. Since the support of any $M \in
\F[\mP']\subseteq\F[\mP]$ is contained in at most $\delta-1$ lines, we have $\mbox{rank}(M)\le \delta-1$.
Hence $\mbox{MaxDim}_{\overline{\delta-1}}(\mP,\F) \ge |\mP|-T_{\delta}(\mP)$, 
as claimed.
\end{proof}

It is now easy to prove  the following generalization of Theorem \ref{tbound}.

\begin{theorem}
Let $1 \le \delta \le k \le m$ be integers, and let $\mP$ be a profile of size $k \times m$.
For any field $\F$ we have $$\mbox{MaxDim}_{\underline{\delta}}(\mP,\F) \le T_\delta(\mP).$$
\end{theorem}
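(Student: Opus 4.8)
The plan is to deduce this bound from the dual statement in Lemma~\ref{maggiore}, exactly as Theorem~\ref{tbound} for Ferrers diagrams follows from the complementary-dimension argument. First I would recall the standard pairing on $\Mat_{k\times m}(\F)$ given by the trace form $\langle A,B\rangle=\mathrm{tr}(AB^{t})$, which is a nondegenerate bilinear form, and observe that it restricts to a nondegenerate pairing on $\F[\mP]$ (the matrices supported on $\mP$), since the coordinate entries indexed by $\mP$ are dual to one another under this form. Hence for any subspace $V\subseteq\F[\mP]$ its orthogonal complement $V^{\perp}$ taken inside $\F[\mP]$ satisfies $\dim V+\dim V^{\perp}=|\mP|$.

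The key step is the rank inequality: if $V\subseteq\F[\mP]$ is a $\underline{\delta}$-space, then $V^{\perp}$ (inside $\F[\mP]$) is a $\overline{\delta-1}$-space, i.e.\ every nonzero element of $V^{\perp}$ has rank at most $\delta-1$. This is the matrix-anticode duality: a subspace $V$ of $\Mat_{k\times m}(\F)$ with all nonzero elements of rank $\geq\delta$ has orthogonal complement with all elements of rank $\leq m+k-? $ — more precisely, one uses that if $N\in V^{\perp}$ had rank $\geq\delta$, then the space of matrices whose row space and column space are compatible with a rank-$(\delta-1)$ bound would be too large to avoid meeting $V$ nontrivially. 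Concretely: the set of $k\times m$ matrices of rank $\leq\delta-1$ supported anywhere is not linear, so instead I would argue via the bilinear form directly — if $A\in V$ and $N\in V^{\perp}$ with $\mathrm{rank}(N)\geq\delta$, write $N=\sum_{\ell=1}^{\mathrm{rank}(N)} x_\ell y_\ell^{t}$; the orthogonality $\mathrm{tr}(A N^{t})=0$ for all $A\in V$ forces a codimension condition that, combined with $\dim V$ being large, contradicts $\mathrm{rank}(A)\geq\delta$ for a suitable $A$. The cleanest route is the known fact (Meshulam, Delsarte) that the trace-dual of a $\underline{\delta}$-space of unrestricted matrices is a $\overline{\delta-1}$-space; restricting supports to $\mP$ does not change this because $\F[\mP]^{\perp}$ inside $\Mat_{k\times m}(\F)$ consists of matrices supported on the complement of $\mP$, so the $\mP$-orthogonal complement of $V$ is the $\mP$-part of the full orthogonal complement.

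Granting that duality, the argument concludes quickly. Let $V\subseteq\F[\mP]$ be a $\underline{\delta}$-space of maximum dimension $\mbox{MaxDim}_{\underline{\delta}}(\mP,\F)$. Its orthogonal complement $V^{\perp}$ inside $\F[\mP]$ is a $\overline{\delta-1}$-space of dimension $|\mP|-\mbox{MaxDim}_{\underline{\delta}}(\mP,\F)$, hence
$$|\mP|-\mbox{MaxDim}_{\underline{\delta}}(\mP,\F)=\dim V^{\perp}\leq \mbox{MaxDim}_{\overline{\delta-1}}(\mP,\F).$$
Now I do \emph{not} immediately get $T_\delta(\mP)$ from Lemma~\ref{maggiore} in the right direction, since that lemma is a lower bound on $\mbox{MaxDim}_{\overline{\delta-1}}$. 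So instead I would prove the matching upper bound $\mbox{MaxDim}_{\overline{\delta-1}}(\mP,\F)\leq |\mP|-T_\delta(\mP)$ and combine: rearranging gives $\mbox{MaxDim}_{\underline{\delta}}(\mP,\F)\leq |\mP|-\mbox{MaxDim}_{\overline{\delta-1}}(\mP,\F)\leq T_\delta(\mP)$. Actually the paper's phrasing ``It is now easy to prove the following generalization of Theorem~\ref{tbound}'' and the placement right after Lemma~\ref{maggiore} suggests the intended proof is even more direct: one shows that a $\underline{\delta}$-space and the space $\F[\mP']$ from the proof of Lemma~\ref{maggiore} (for the minimizing $I,J$) must intersect trivially — indeed any nonzero matrix supported on at most $\delta-1$ lines has rank $\leq\delta-1<\delta$, so it cannot lie in a $\underline{\delta}$-space. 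Hence $V\cap\F[\mP']=\{0\}$, giving $\dim V+\dim\F[\mP']\leq\dim\F[\mP]=|\mP|$, i.e.\ $\dim V\leq|\mP|-|\mP'|=|\mP|-(|\mP|-T_\delta(\mP))=T_\delta(\mP)$.

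I expect the main obstacle to be purely organizational rather than mathematical: one must be careful that $\F[\mP']$ as constructed in Lemma~\ref{maggiore} genuinely consists of matrices of rank $\leq\delta-1$ (clear, since $|I|+|J|=\delta-1$ and the support lies in those $\delta-1$ lines) and that the minimizing choice of $(I,J)$ used there is the same one making $|\mP'|=|\mP|-T_\delta(\mP)$ — both already checked in that lemma's proof. So the only real content is the trivial-intersection observation $V\cap\F[\mP']=\{0\}$, which needs just the definition of a $\underline{\delta}$-space, and then the dimension count. I would therefore write the proof as: ``Choose $I,J$ with $|I|+|J|=\delta-1$ and $T_\delta(\mP,I,J)=T_\delta(\mP)$; set $\mP'=\{(i,j)\in\mP : i\in I\text{ or }j\in J\}$ as in the proof of Lemma~\ref{maggiore}, so $|\mP'|=|\mP|-T_\delta(\mP)$. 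If $V\subseteq\F[\mP]$ is a $\underline{\delta}$-space, then $V\cap\F[\mP']=\{0\}$ because every nonzero element of $\F[\mP']$ has support contained in at most $\delta-1$ lines, hence rank at most $\delta-1$. Therefore $\dim V\leq|\mP|-|\mP'|=T_\delta(\mP)$.''
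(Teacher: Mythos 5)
Your final argument is correct and is essentially the paper's own proof: the paper pairs a maximal $\underline{\delta}$-space $V$ with a maximal $\overline{\delta-1}$-space $W$ (whose dimension is at least $|\mP|-T_\delta(\mP)$ by Lemma~\ref{maggiore}) and concludes from $V\cap W=\{0\}$ and a dimension count in $\F[\mP]$, while you simply substitute for $W$ the explicit space $\F[\mP']$ constructed in that lemma's proof, which amounts to the same thing. The trace-form duality detour in the middle of your write-up is unnecessary, and you correctly discard it before stating the final proof.
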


\begin{proof}
 Let $V$ be a
$\underline{\delta}$-space of matrices with shape $\mP$ of dimension
$\mbox{MaxDim}_{\underline{\delta}}(\mP,\F)$. Similarly, let $W$ be a 
$\overline{\delta-1}$-space
of matrices with shape $\mP$ of dimension
$\mbox{MaxDim}_{\overline{\delta-1}}(\mP,\F)$. Denote by $\F[\mP]$ the
$|\mP|$-dimensional $\F$-vector space of $k \times m$ matrices with
shape $\mP$ and entries in $\F$. We
have $V \cap W = \left\{ 0 \right\}$ and $V\oplus W \subseteq \F[\mP]$. 
By Lemma \ref{maggiore}, $\dim V \leq |\mP|-(|\mP|-T_{\delta}(\mP))$.
\end{proof}

\begin{remark}
By Lemma \ref{maggiore}, Conjecture \ref{tconj} can be restated as follows:
Over a finite field $\F_q$ and for any $\delta$, the vector space $\F_q[\mF]$ of matrices of fixed 
shape $\mF$ decomposes as
$$\F_q[\mF]=\underline{V} \oplus
\overline{V},$$ where $\underline{V}$ is a $\underline{\delta}$-space
and $\overline{V}$ is a $\overline{\delta-1}$-space. 
We stress that this is in general false when the underlying field is not finite 
(see Proposition~\ref{nota}).
\end{remark}

\begin{notation}
For integers $1 \le k \le m$, let $\prec$ denote the lexicographic order
on $[k] \times [m]$, i.e., $(i,j) \prec (i',j')$ if and only if either $i<i'$ or
$i=i'$ and $j<j'$.
For a $k\times m$ matrix $M$ over a field $\F$
we set $$p(M):=\min \{ (i,j) \ | \  M_{i,j} \ne 0\}.$$ For a set
$\mathcal{A}$ of $k\times m$ matrices  define the $0$-$1$
matrix $M(\mathcal{A})$ over $\F$ as follows:
\begin{enumerate}
 \item $M(\mathcal{A})_{i,j}=1$ if $(i,j)=p(A)$ for some $A\in\mathcal A$,
 \item $M(\mathcal{A})_{i,j}=0$ otherwise.
\end{enumerate}
Finally, denote by $\rho(\mathcal{A})$ the minimal cardinality of a set of lines of
$M(\mathcal{A})$ which contain all the $1$'s appearing in $M(\mathcal{A})$.
\end{notation}

\begin{lemma} \label{le}(\cite{Mesh}, Theorem~1)
 Let $\mathcal{A}$ be a set of $k\times m$ matrices over a
field $\F$. Then $\mbox{Span}_\F(\mathcal{A})$ contains a matrix of rank at
least $\rho(\mA)$.
\end{lemma}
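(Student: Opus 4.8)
The plan is to combine a König-type combinatorial observation with a determinantal computation and the Combinatorial Nullstellensatz. First I would reinterpret $\rho(\mathcal{A})$: it is exactly the minimum size of a vertex cover of the bipartite graph whose vertex set is $(\text{rows})\sqcup(\text{columns})$ and whose edges are the positions $(i,j)$ with $M(\mathcal{A})_{i,j}=1$. By König's theorem this equals the maximum size of a matching, i.e. the largest set of $1$'s of $M(\mathcal{A})$ with pairwise distinct rows and pairwise distinct columns. Hence, writing $\rho:=\rho(\mathcal{A})$, one can choose matrices $A_1,\dots,A_\rho\in\mathcal{A}$ and positions $(i_\ell,j_\ell)=p(A_\ell)$ with the $i_\ell$ pairwise distinct and the $j_\ell$ pairwise distinct; after reindexing we may assume $i_1<i_2<\cdots<i_\rho$. (If $\rho=0$ there is nothing to prove, as then $\mbox{Span}_\F(\mathcal{A})=\{0\}$.)

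Next, for scalars $c_1,\dots,c_\rho$ put $M(\mathbf{c}):=\sum_{\ell=1}^\rho c_\ell A_\ell\in\mbox{Span}_\F(\mathcal{A})$, and let $N(\mathbf{c})$ be its $\rho\times\rho$ submatrix on rows $i_1,\dots,i_\rho$ and columns $j_1,\dots,j_\rho$. I would study $\det N(\mathbf{c})$ as a polynomial in $c_1,\dots,c_\rho$. Since $p(A_t)=(i_t,j_t)$, every entry of $A_t$ lexicographically before $(i_t,j_t)$ vanishes; in particular $A_t$ is zero on every row of index $<i_t$, so $(A_t)_{i_\ell,j_{\ell'}}=0$ whenever $\ell<t$. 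Thus each entry $N(\mathbf{c})_{\ell,\ell'}=\sum_{t\le\ell}c_t(A_t)_{i_\ell,j_{\ell'}}$ is a homogeneous linear form, and $\det N(\mathbf{c})$ is homogeneous of degree $\rho$. The monomial $c_1c_2\cdots c_\rho$ can only be produced by selecting $c_t$ with $t=\ell$ from the $\ell$-th factor of the expanded determinant, for every $\ell$, so its coefficient equals $\det B$ where $B_{\ell,\ell'}:=(A_\ell)_{i_\ell,j_{\ell'}}$. Here $B_{\ell,\ell}=(A_\ell)_{i_\ell,j_\ell}\ne 0$, while $B_{\ell,\ell'}=0$ whenever $j_{\ell'}<j_\ell$ (again because then $(i_\ell,j_{\ell'})\prec p(A_\ell)$); looking at the index $\ell$ with $j_\ell$ maximal, fixing it, and iterating shows that the identity is the only permutation contributing to $\det B$, so $\det B=\prod_\ell(A_\ell)_{i_\ell,j_\ell}\ne 0$. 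Therefore $\det N$ is a nonzero homogeneous polynomial of degree $\rho$ in which the squarefree monomial $c_1\cdots c_\rho$ appears with nonzero coefficient.

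Finally I would exhibit a good evaluation point. Over an infinite field this is immediate. Over an arbitrary field I would invoke the Combinatorial Nullstellensatz: the total degree of $\det N$ is $\rho=\sum_\ell 1$ and the coefficient of $\prod_\ell c_\ell^{\,1}$ is $\det B\ne 0$, so for any subsets $S_\ell\subseteq\F$ with $|S_\ell|\ge 2$ (take $S_\ell=\F$ when $\F$ is finite) there is $\mathbf{c}\in S_1\times\cdots\times S_\rho$ with $\det N(\mathbf{c})\ne 0$. For such $\mathbf{c}$, the matrix $M(\mathbf{c})\in\mbox{Span}_\F(\mathcal{A})$ has a nonsingular $\rho\times\rho$ submatrix, hence rank at least $\rho=\rho(\mathcal{A})$, which is the claim.

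The combinatorial reduction and the determinant expansion are routine; the one genuine difficulty is the final step, namely securing an evaluation point over small finite fields. A greedy construction of $\mathbf{c}$, one coordinate at a time, only succeeds when the pivot pattern $(i_\ell,j_\ell)$ is monotone — so that incorporating $c_\ell A_\ell$ disturbs neither the already-fixed rows nor the larger already-used columns — and fails for a general pattern. Passing to a large extension field does not help, since a polynomial over a finite field may vanish identically without being the zero polynomial; this is exactly why one needs the Combinatorial Nullstellensatz here (equivalently, the fact that a homogeneous degree-$\rho$ polynomial whose degree-$\rho$ squarefree monomial has nonzero coefficient cannot vanish identically on $\F^\rho$, seen by reducing modulo the ideal $(c_\ell^{|\F|}-c_\ell)$).
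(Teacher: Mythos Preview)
The paper does not prove this lemma; it is quoted verbatim from Meshulam's paper \cite{Mesh} and used as a black box in the proof of Theorem~\ref{teo6}. There is therefore no ``paper's own proof'' to compare against.

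Your argument is correct. The three ingredients---(i) reading $\rho(\mathcal{A})$ via K\"onig's theorem as the size of a maximum matching among the pivot positions, (ii) verifying that the coefficient of $c_1\cdots c_\rho$ in $\det N(\mathbf{c})$ equals $\prod_\ell (A_\ell)_{i_\ell,j_\ell}\neq 0$ by the triangularity of $B$ with respect to the $j$-order, and (iii) invoking the Combinatorial Nullstellensatz to produce an evaluation point over an arbitrary field---fit together cleanly. Your reasoning that the only permutation contributing to $\det B$ is the identity is sound: any $\sigma$ with $j_{\sigma(\ell)}\ge j_\ell$ for all $\ell$ must be the identity since the $j_\ell$ are distinct and their sum is permutation-invariant. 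Your diagnosis that the coordinate-by-coordinate greedy choice of $\mathbf{c}$ is not guaranteed to work is also accurate: at step $s$ the coefficient of $c_s$ in the $s\times s$ minor is a signed sum of several cofactors (not just the inductively nonzero one), and these can cancel over small fields, so the Nullstellensatz (or an equivalent multilinear-monomial argument) is genuinely the right tool here.

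This is in fact essentially Meshulam's own route: his proof of Theorem~1 in \cite{Mesh} also passes through K\"onig's theorem and the same pivot-determinant computation. So while the paper omits the proof, your reconstruction matches the cited source.
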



The following theorem provides an answer to Question~\ref{que2} 
and Question~\ref{que3}. It is inspired by Theorem~2
of~\cite{Mesh}.

\begin{theorem} \label{teo6}
 Let $1 \le \delta \le k \le m$ be integers, and let $\mP$ be a profile
of size $k \times m$. We have
$$\mbox{MaxDim}_{\overline{\delta-1}}(\mP,\F) = |\mP|-T_{\delta}(\mP)$$
for any field $\F$. 
\end{theorem}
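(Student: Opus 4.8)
The plan is to prove the two inequalities separately. The inequality $\mbox{MaxDim}_{\overline{\delta-1}}(\mP,\F) \ge |\mP|-T_{\delta}(\mP)$ is exactly the content of Lemma~\ref{maggiore}, so nothing remains to be done there. The whole work is in the reverse inequality $\mbox{MaxDim}_{\overline{\delta-1}}(\mP,\F) \le |\mP|-T_{\delta}(\mP)$, i.e.\ every $\overline{\delta-1}$-space $V\subseteq\F[\mP]$ has $\dim V \le |\mP|-T_\delta(\mP)$. Equivalently, we must show that if $V$ has dimension larger than $|\mP|-T_\delta(\mP)$, then it contains a matrix of rank at least $\delta$. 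This is where Lemma~\ref{le} enters: it produces a matrix of rank at least $\rho(\mathcal{A})$ inside $\mbox{Span}_\F(\mathcal{A})$, once we pick a suitable generating set $\mathcal{A}$.

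The natural choice is to let $\mathcal{A}$ be a reduced basis of $V$ with respect to the lexicographic order $\prec$: perform Gaussian-style elimination so that the pivot positions $p(A)$, $A\in\mathcal{A}$, are pairwise distinct, and the entries of the other basis elements in those pivot positions are zero. Then $|\mathcal{A}|=\dim V$ and $M(\mathcal{A})$ has exactly $\dim V$ ones, located at the pivot positions, which are a subset of $\mP$. The key combinatorial step is to relate $\rho(\mathcal{A})$ — the minimum number of lines needed to cover these $\dim V$ ones — to $T_\delta(\mP)$. If $\rho(\mathcal{A}) \le \delta-1$, then all pivots lie in some set of $\le\delta-1$ lines, i.e.\ in $\le\delta-1$ rows/columns of $\mP$; say we need rows indexed by $I$ and columns indexed by $J$ with $|I|+|J|\le\delta-1$ (pad to equality). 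Removing these lines from $\mP$ deletes all $\dim V$ pivot cells, so $\dim V \le |\mP| - T_\delta(\mP,I,J) \le |\mP| - T_\delta(\mP)$, wait — one must be careful with the direction: removing the lines in $I,J$ leaves $T_\delta(\mP,I,J)$ cells, and since \emph{none} of the remaining cells is a pivot, the $\dim V$ pivots all got removed, giving $\dim V \le |\mP| - T_\delta(\mP,I,J) \le |\mP| - T_\delta(\mP)$, which is exactly the bound we want. On the other hand, if $\rho(\mathcal{A}) \ge \delta$, then by Lemma~\ref{le} the space $V = \mbox{Span}_\F(\mathcal{A})$ contains a matrix of rank at least $\delta$, contradicting the hypothesis that $V$ is a $\overline{\delta-1}$-space. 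So in all cases $\dim V \le |\mP|-T_\delta(\mP)$.

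The main obstacle, and the step requiring the most care, is the covering argument: translating "$\rho(\mathcal{A}) \le \delta-1$" into a genuine choice of index sets $I\subseteq[k]$, $J\subseteq[m]$ with $|I|+|J|=\delta-1$ witnessing $T_\delta(\mP,I,J)$ small, and confirming that the pivot positions — not the full supports of the basis matrices, just the $\prec$-minimal nonzero entries — form the ones of $M(\mathcal{A})$ and that their count equals $\dim V$ exactly because reduction makes the pivots distinct. One should also note that a minimum line cover of $M(\mathcal{A})$ may use fewer than $\delta-1$ lines, in which case one simply enlarges $I$ or $J$ arbitrarily (there is room since $\delta-1 < k \le m$, or more precisely one can always add unused row/column indices) to reach $|I|+|J|=\delta-1$ without decreasing $T_\delta(\mP,I,J)$ below $T_\delta(\mP)$ — adding more lines to remove only makes $T_\delta(\mP,I,J)$ smaller, which is the direction we need. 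Everything else (existence of the reduced basis, the bookkeeping $\dim V = $ number of pivots $= $ number of ones) is routine linear algebra.
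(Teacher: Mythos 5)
Your proposal is correct and follows essentially the same route as the paper: the lower bound is quoted from Lemma~\ref{maggiore}, and the upper bound is obtained by reducing a basis of $V$ so that the pivot positions $p(M_i)$ are distinct, then invoking Lemma~\ref{le} to force the $\dim V$ pivots into at most $\delta-1$ lines and comparing with $T_\delta(\mP,I,J)\ge T_\delta(\mP)$. Your extra care about padding $(I,J)$ to $|I|+|J|=\delta-1$ and about counting pivots is exactly the bookkeeping the paper's proof leaves implicit.
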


\begin{proof}
 By Lemma \ref{maggiore} it suffices to show that
$\mbox{MaxDim}_{\overline{\delta-1}}(\mP,\F) \le |\mP|-T_{\delta}(\mP)$.
Let $V$ be a $\overline{\delta-1}$-space of $k \times m$ matrices over $\F$ with
shape $\mP$ of dimension $r:=\mbox{MaxDim}_{\overline{\delta-1}}(\mP,\F)$.
Choose a basis $\{N_1,...,N_r\}$ of $V$. Let $\varphi$ be the
$\F$-isomorphism that sends a $k\times m$ matrix $M$ to the vector of length $km$ whose entries are the entries of $M$ ordered lexicographically. Define $w_i:=\varphi(N_i)$ for $i=1,...,r$. Perform Gaussian
elimination on $w_1,...,w_r$ and get vectors $v_1,...,v_r$. Set
$M_i:=\varphi^{-1}(v_i)$ for $i=1,...,r$. It is clear that $\mathcal{A}:=\{
M_1,...,M_r\}$ is a basis of $V$. Since $p(M_i)\neq p(M_j)$ for $i \neq j$, the support $\mP'$ 
of $M(\mA)$ has cardinality exactly $r$. 

Since $V$ is a $\overline{\delta-1}$-space, by Lemma \ref{le} the support
$\mP'$ is contained in a set of $i$ rows and $\delta-i-1$
columns for some $0 \le i \le \delta-1$. Since $\mP' \subseteq \mP$, 
we conclude that $|\mP'|
\le |\mP|-T_\delta(\mP)$.
\end{proof}

\section{Applicatons and examples} \label{esempibelli}

In this section we show how one can apply the results of Section \ref{varicasi}
to construct large subspace codes with given parameters,
for any size of the ground field $\F_q$. In particular we show how to construct 
the largest known codes for $q \ge 3$ and many choices of the parameters.
Moreover, being systematic, the constructions that we propose may be useful for 
designing efficient decoding algorithms. 

For $q=2$, $\delta=2,3$ and small values of $n$ and $k$, there exists subspace
codes which have larger cardinality than the codes we can construct using the results
contained in this paper (see e.g. \cite{e-v}, \cite{alns} and \cite{koku}). 
The techniques employed to produce such codes include a computer search,
which is not feasible for large values of $q$ and of the other parameters.

\begin{definition}
Given $k$-dimensional vector subspaces 
$X,Y \subseteq \F_q^n$,  the \textbf{injection distance} between $X$ and $Y$ is defined as $d_I(X,Y):=k-\dim(X \cap Y)$. 
Denote by $\mG_q(k,n)$ the set of $k$-dimensional subspaces of $\F_q^n$. The \textbf{minimum distance}
of a subspace code $\mathcal{C} \subseteq \mG_q(k,n)$ with
$|\mC| \ge 2$ is defined as the minimum of all the pairwise distances between
distinct elements of $\mC$. 
\end{definition} 

Let us briefly recall the \textbf{multilevel construction} for subspace codes proposed by T. Etzion
and N. Silberstein in \cite{ref1}. 

\begin{notation}
Let $X$ be a $k$-dimensional subspace of $\F_q^n$ and let $M(X)$
be the unique $k \times n$ matrix in row-reduced echelon form with rowspace $X$.
We associate to $X$ the binary vector $v(X)$ of length $n$ and weight $k$, which has a $1$ 
in position $i$ if and only if $M(X)$ has a pivot in the $i$-th column.
The vector $v(X)$ is called the \textbf{pivot vector} associated to $X$ and $M(X)$.
\end{notation}

\begin{lemma}[\cite{ref1}, Lemma 2]
Let $X,Y \in \mG_q(k,n)$. We have $d_I(X,Y) \ge \frac{1}{2} d_H(v(X),v(Y))$, where $d_H$ denotes the Hamming distance.
\end{lemma}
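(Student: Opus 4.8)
The statement to prove is: for $X,Y \in \mG_q(k,n)$, one has $d_I(X,Y) \ge \frac{1}{2}d_H(v(X),v(Y))$. The plan is to work directly with the row-reduced echelon matrices $M(X)$ and $M(Y)$ and the combinatorics of their pivot columns. First I would recall that $d_I(X,Y) = k - \dim(X\cap Y)$, so it is equivalent to show $\dim(X\cap Y) \le k - \frac12 d_H(v(X),v(Y))$, i.e.\ that $X\cap Y$ has small dimension whenever the pivot sets of $X$ and $Y$ differ a lot. Write $P_X \subseteq [n]$ for the set of pivot positions of $M(X)$ (a set of size $k$) and similarly $P_Y$; then $d_H(v(X),v(Y)) = |P_X \triangle P_Y| = 2\bigl(k - |P_X\cap P_Y|\bigr)$, so the claim becomes simply $\dim(X\cap Y) \le |P_X\cap P_Y|$.

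The key step is therefore to prove $\dim(X\cap Y) \le |P_X\cap P_Y|$. I would do this by exhibiting an injective linear map from $X\cap Y$ into $\F_q^{P_X\cap P_Y}$, namely the restriction-to-coordinates map $\pi: z \mapsto (z_i)_{i\in P_X\cap P_Y}$. To see injectivity, suppose $z \in X\cap Y$ with $z_i = 0$ for all $i\in P_X\cap P_Y$, and assume $z\neq 0$. Let $j$ be the index of the first nonzero coordinate of $z$. Since $z\in X$ and $M(X)$ is in reduced echelon form, expressing $z$ as a combination of the rows of $M(X)$ forces $j$ to be a pivot column of $X$: the leading term of any nonzero element of the rowspace sits in a pivot position (a row of $M(X)$ whose pivot is $>j$ contributes nothing to coordinate $j$, and the row with pivot $j$, if used, would make $z_j$ nonzero in the echelon structure — the standard fact that $p(z)\in P_X$ for $z\in X\setminus\{0\}$). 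By the same argument applied to $Y$, we get $j \in P_Y$. Hence $j \in P_X\cap P_Y$, contradicting $z_j\ne 0$. Therefore $\pi$ is injective and $\dim(X\cap Y)\le |P_X\cap P_Y|$.

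Putting the pieces together: $d_I(X,Y) = k - \dim(X\cap Y) \ge k - |P_X\cap P_Y| = \frac12\,|P_X\triangle P_Y| = \frac12\, d_H(v(X),v(Y))$, which is the desired inequality. I expect the only real subtlety to be the justification that the least-index nonzero coordinate of a nonzero vector in $X$ is a pivot column of $M(X)$; this is a routine consequence of the reduced-echelon structure, but it is the engine of the argument and should be stated carefully rather than waved through. Everything else is the elementary identity $d_H(v(X),v(Y)) = 2(k - |P_X\cap P_Y|)$ for two weight-$k$ binary vectors of length $n$.
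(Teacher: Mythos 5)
Your proof is correct: the identity $d_H(v(X),v(Y))=2\bigl(k-|P_X\cap P_Y|\bigr)$ together with the injectivity of the coordinate-projection $X\cap Y\to\F_q^{P_X\cap P_Y}$ (which rests on the standard fact that the leading position of any nonzero vector in a rowspace is a pivot of its echelon matrix) gives exactly the desired bound. The paper itself cites this lemma from \cite{ref1} without proof, and your argument is essentially the standard one given there.
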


\begin{notation} \label{assoc}
Let $v$ be a binary vector of length $n$ and weigth $k$, and let $1 \le p_1 <
p_2 < \cdots < p_k \le n$ be the positions of the
$k$ ones of $v$. The \textbf{Ferrers diagram associated to $v$} is the Ferrars diagram $\mF_v=[r_1,\ldots,r_k]$
of size $k \times (n-k)$ with $r_i=n-k-p_i+i$ for all $i=1,...,k$.
\end{notation}

The following result is straightforward. See \cite{ref1}, Section III and IV
for examples and details.

\begin{lemma} \label{rref}
Let $v$ be a binary vector of length $n$ and weigth $k$, and let $1 \le p_1 <
p_2 < \cdots < p_k \le n$ be the positions of the
$k$ ones of $v$. Let $M \in \F_q[\mF_v]$. For
$j=1,...,n$ define $n_j:=\#\{1\leq i\leq k\mid p_i\leq j\}$.
There exists a unique  $k \times n$ matrix $N$ over $\F_q$ in
row-reduced echelon form having $v$ as pivot vector and
 $N_{i,j}=M_{i,j-n_j}$ for
all $i \in \{1,...,k\}$ and $j \in \{1,...,n\} \setminus  \{p_1,...,p_k\}$. 
\end{lemma}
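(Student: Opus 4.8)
The plan is to construct the matrix $N$ directly and verify it has the three required properties: it is $k \times n$, it is in row-reduced echelon form with $v$ as pivot vector, and it agrees with $M$ in the prescribed non-pivot positions. First I would define $N$ explicitly: for each pivot position $p_i$, set the $i$-th column of $N$ to be the standard basis vector $e_i$; for each non-pivot column $j$, set $N_{i,j} := M_{i, j - n_j}$. It must be checked that this indexing makes sense, i.e.\ that $1 \le j - n_j \le m = n-k$ for all non-pivot $j$, and more importantly that whenever $M_{i,j-n_j}$ could be nonzero we are not in a position forbidden by echelon form — this is where the Ferrers diagram condition $r_i = n - k - p_i + i$ enters.

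The key step is to reconcile the shape constraint $\operatorname{supp}(M) \subseteq \mF_v$ with the echelon-form constraint on $N$. A matrix in row-reduced echelon form with pivots in columns $p_1 < \dots < p_k$ must have, in row $i$, zero entries in all columns $\le p_i$ except for the $1$ in column $p_i$ itself, and zeros in the pivot columns $p_1,\dots,p_k$. The non-pivot columns to the right of $p_i$ are exactly the free columns available to row $i$; I would count that there are precisely $n - k - (p_i - i) = r_i$ of them (since among columns $1,\dots,p_i$ there are $p_i$ columns of which $i$ are pivot columns, leaving $p_i - i$ non-pivot columns to the left, hence $n - k - (p_i - i)$ non-pivot columns to the right). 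This matches the cardinality of the $i$-th row of $\mF_v$, and the bijection $j \mapsto j - n_j$ from the non-pivot columns right of $p_i$ to $\{1,\dots,r_i\}$ is order-preserving, so $N_{i,j} = M_{i,j-n_j}$ places the entries of row $i$ of $M$ exactly into the allowed free positions of row $i$ of $N$. All other entries of $N$ in row $i$ (columns $\le p_i$, and the pivot columns) are zero, which is forced by echelon form and consistent with $M$ having shape $\mF_v$.

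Uniqueness is then immediate: the echelon-form and pivot-vector conditions pin down every entry in a pivot column and every entry in a non-pivot column lying weakly left of the row's pivot, and the stated equations $N_{i,j} = M_{i,j-n_j}$ pin down the rest, so $N$ is uniquely determined. I expect the main obstacle to be purely bookkeeping: carefully verifying the index arithmetic $n_j$, $j - n_j$, and $r_i = n-k-p_i+i$ line up so that the "free" entries of the echelon matrix are in order-preserving bijection with the columns of $\F_q[\mF_v]$, and that no entry of $M$ is forced into a pivot column or into the staircase of zeros. Once that correspondence is set up, the claim follows with no further computation, which is why the paper calls it straightforward; I would simply record the index count and the bijection and refer to \cite{ref1} for the worked examples.
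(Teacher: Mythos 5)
Your construction and verification are correct, and this is exactly the direct bookkeeping argument the paper omits as ``straightforward'': the count $n-k-(p_i-i)=r_i$ of free non-pivot columns to the right of $p_i$, the order-preserving bijection $j\mapsto j-n_j$, and the observation that the forced zeros of the echelon form correspond precisely to the complement of row $i$ of $\mF_v$. One tiny indexing slip: the bijection sends the non-pivot columns to the right of $p_i$ onto $\{p_i-i+1,\dots,n-k\}$ (the rightmost $r_i$ column indices, matching the right-justified diagram), not onto $\{1,\dots,r_i\}$; this does not affect the argument.
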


We denote the matrix $N$ of Lemma \ref{rref} by $N(v,M)$. The
multilevel construction of \cite{ref1} is summarized in the following
result.

\begin{theorem}[\cite{ref1}, Theorem 3] \label{boundmulti}
Let $C$ be a binary code of constant weight $k$, length $n$ and minimum distance
at least $2\delta$. For any $v \in C$ let $S(v) \subseteq \F_q[\mF_v]$ be
a $\underline{\delta}$-space. The set
$$\left\{   \mbox{rowsp} \ N(v,M) \ | \ v \in C, M \in S(v) \right\} \subseteq
\mG_q(k,n)$$
is a subspace code of minimum distance al least $\delta$ and cardinality
$\sum_{v \in C} q^{\dim S(v)}$.
\end{theorem}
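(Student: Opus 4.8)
The plan is to prove Theorem~\ref{boundmulti} by directly verifying the two claimed properties of the set $\mathcal{C}:=\left\{ \mbox{rowsp} \ N(v,M) \mid v \in C, M \in S(v) \right\}$, namely its cardinality and its minimum distance. For the cardinality, I would first observe that for a fixed $v \in C$, the matrices $N(v,M)$ with $M$ ranging over $S(v)$ are pairwise distinct: by Lemma~\ref{rref}, the entries of $N(v,M)$ in the non-pivot columns are exactly the entries of $M$ (suitably reindexed), so $N(v,M)=N(v,M')$ forces $M=M'$. Moreover each $N(v,M)$ is in row-reduced echelon form with pivot vector $v$, hence is the unique such matrix representing its rowspace, so the map $M \mapsto \mbox{rowsp} \ N(v,M)$ is injective on $S(v)$, contributing $|S(v)|=q^{\dim S(v)}$ distinct subspaces. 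Different $v$'s give subspaces with different pivot vectors (since $v(X)$ is an invariant of $X$), hence disjoint families, so $|\mathcal{C}|=\sum_{v\in C} q^{\dim S(v)}$.

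For the minimum distance, take two distinct subspaces $X=\mbox{rowsp} \ N(v,M)$ and $Y=\mbox{rowsp} \ N(w,M')$ in $\mathcal{C}$ and split into two cases. If $v \ne w$, then by Lemma~\ref{ref1}, Lemma 2 (the lemma quoted just above Notation~\ref{assoc}), $d_I(X,Y) \ge \frac12 d_H(v(X),v(Y)) = \frac12 d_H(v,w) \ge \frac12 \cdot 2\delta = \delta$, using that $C$ has minimum distance at least $2\delta$. If $v=w$ but $M \ne M'$, then $X$ and $Y$ have the same pivot vector $v$; here one writes $N(v,M)$ and $N(v,M')$ in the block form dictated by the pivot positions, and observes that $\dim(X \cap Y) = k - \mbox{rank}(N(v,M)-N(v,M'))$, because two subspaces both given in RREF with the same pivot columns intersect in the kernel-complement determined by the difference matrix. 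Now $N(v,M)-N(v,M')$ has zero entries in the pivot columns and, in the non-pivot columns, the entries of the matrix $M-M' \in S(v) \setminus \{0\}$ (up to the reindexing of Lemma~\ref{rref}), which does not change the rank. Since $S(v)$ is a $\underline{\delta}$-space, $\mbox{rank}(M-M') \ge \delta$, hence $d_I(X,Y) = k - \dim(X\cap Y) = \mbox{rank}(N(v,M)-N(v,M')) \ge \mbox{rank}(M-M') \ge \delta$.

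The main obstacle is the second case, i.e., making precise the identity $d_I(X,Y) = \mbox{rank}(N(v,M)-N(v,M'))$ when $X$ and $Y$ share a pivot vector, and then checking that deleting the (all-zero) pivot columns from the difference matrix and reindexing according to Lemma~\ref{rref} recovers exactly $M-M'$ without altering the rank. The rank claim about the difference matrix is elementary linear algebra once one notes that the pivot columns of $N(v,M)-N(v,M')$ vanish, so the rank is carried entirely by the non-pivot columns, which are a permutation of the columns of $M-M'$; the injection-distance identity follows because $X+Y$ is spanned by the rows of the stacked matrix $\binom{N(v,M)}{N(v,M')}$, which row-reduces to $\binom{N(v,M)}{N(v,M)-N(v,M')}$, giving $\dim(X+Y)=k+\mbox{rank}(N(v,M)-N(v,M'))$ and hence $\dim(X\cap Y)=k-\mbox{rank}(N(v,M)-N(v,M'))$. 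Combining both cases yields that the minimum distance of $\mathcal{C}$ is at least $\delta$, completing the proof.
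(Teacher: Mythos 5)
Your proof is correct. Note that the paper itself does not prove this statement: Theorem~\ref{boundmulti} is imported verbatim from \cite{ref1} (Theorem 3 there), so there is no in-paper argument to compare against; your write-up is essentially the standard proof of the multilevel (lifting) construction from that reference. The two cases are handled exactly as one should: distinct pivot vectors are disposed of by the quoted Hamming-distance lemma, and for a common pivot vector $v$ the identity $d_I(X,Y)=\mathrm{rank}\bigl(N(v,M)-N(v,M')\bigr)$ reduces the claim to $\mathrm{rank}(M-M')\ge\delta$, which holds because $S(v)$ is a $\underline{\delta}$-space. The only point worth making fully explicit is the step $\dim(X+Y)=k+\mathrm{rank}\bigl(N(v,M)-N(v,M')\bigr)$: it requires that $\mbox{rowsp}\,N(v,M)$ meets $\mbox{rowsp}\bigl(N(v,M)-N(v,M')\bigr)$ trivially, which follows since every nonzero vector in the rowspace of an RREF matrix has a nonzero entry in some pivot column, while the difference matrix vanishes on all pivot columns --- an observation you already record, so the gap is only one of presentation.
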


\begin{remark}
Large subspace codes with $\delta>2$ were obtained in \cite{ref1} combining the
multilevel construction and a computer search, for small values of $q$.
The computer search part is employed to find large spaces of matrices
of rank $\geq\delta$ and given shape. The results of Section~\ref{varicasi} 
allow us to construct in a systematic way (i.e., without a computer search) 
linear spaces of matrices with the same parameters as those found via 
computer search in~\cite{ref1}. In particular, we can construct subspace 
codes with the same parameters for any $q$. 
\end{remark}

\begin{remark}
 In \cite{aj}, A-L. Trautmann and J. Rosenthal propose the
\textbf{pending dots} construction to improve the multilevel
construction of \cite{ref1}. As the multilevel construction, the pending dots construction also depends
on the existence of large spaces of matrices with bounded rank and given shape. 
Using the idea of pending dots,  A-L. Trautmann and N.
Silberstein construct large subspace codes in
$\mG_q(k,n)$ of
minimum injection distance 
$\delta=k-1$ (see Section V of \cite{alns} and Section III of \cite{alns2}) for
arbitrary values of $q$. The Ferrers 
diagrams that they consider for the case $\delta=k-1$  (\cite{alns}, Lemma 23 and
\cite{alns2}, Lemma 18) are
special cases of the diagrams studied in Theorem \ref{teo3}.
\end{remark}

\begin{remark}
 Spread and partial spread codes (\cite{GR}, \cite{GR2},
\cite{partial}) can be obtained through the
multilevel construction for a special choice of the pivot vectors. 
The Ferrers diagrams associated to those pivot vectors are 
studied in Theorem \ref{rettang}.
\end{remark}

We now give some examples of how to combine the results of Section \ref{varicasi} 
and the multilevel construction to obtain subspace codes with the largest known 
cardinality for given $k$, $n$, and $\delta$.

\begin{example} \label{exB}
Let $(n,k,\delta):=(10,5,3)$, and let $q$ be any prime power. 
Consider the binary code
 $$C:=\{1111100000, \ 1100011100, \ 0011011010, \ 1000110011, \ 0010101101, \
0101000111\}.$$ Observe that $C$ has constant weight $5$ and minimum distance
$6$. Let $v_1,...,v_6$ be the elements of $C$ in the displayed order. It follows from
Theorem \ref{teo1} that:
\begin{enumerate}
 \item $\mbox{MaxDim}_{\underline{\delta}}(\mF_{v_1},\F_q)=15$,
 \item $\mbox{MaxDim}_{\underline{\delta}}(\mF_{v_2},\F_q)=6$,
 \item $\mbox{MaxDim}_{\underline{\delta}}(\mF_{v_3},\F_q)=2$.
\end{enumerate}
Notice moreover that $\mF_{v_4}$ has the following
graphical representation.
$$\begin{array}{ccccc} \mbox{\textbullet} & \mbox{\textbullet} &
\mbox{\textbullet} & \mbox{\textbullet} & \mbox{\textbullet} \\

 & & & \mbox{\textbullet} & \mbox{\textbullet}  \\

& & & \mbox{\textbullet} & \mbox{\textbullet}  \\
 \end{array}$$
By Theorem \ref{rettang} there exist a $2$-dimensional
$\underline{2}$-space $V$ of $2 \times 2$ matrices over $\F_q$ and  a
$2$-dimensional $\underline{1}$-space $W$ of $1 \times 2$ matrices over $\F_q$.
Let $\{ M_1,M_2\}$ and $\{N_1,N_2 \}$ be bases for $V$ and $W$, respectively.
Then
$$\mbox{Span}_{\F_q} \left\{ \begin{bmatrix} 0 & N_i & 0_{1 \times 2} \\ 0 &
0_{2\times 2} & M_i \end{bmatrix} \ | \ i=1,2 \right\} $$
is a $2$-dimensional $\underline{\delta}$-space of matrices with shape
$\mF_{v_4}$. Since $T_{\delta}(\mF_{v_4})=2$, it follows that
$\mbox{MaxDim}_{\underline{\delta}}(\mF_{v_4},\F_q)=2$. Finally, $\mF_{v_5}$
has $T_{\delta}(\mF_{v_5}) =1$ and contains the Ferrers diagram
$$\begin{array}{ccc} \mbox{\textbullet} & \mbox{\textbullet} &
\mbox{\textbullet} \\
 & \mbox{\textbullet} & \mbox{\textbullet} \\

 &  & \mbox{\textbullet}  \\
 \end{array}$$
Hence by Corollary \ref{teo5} and Lemma \ref{tutto} we have
$\mbox{MaxDim}_{\underline{\delta}}(\mF_{v_5},\F_q)=1$.
Using Theorem \ref{boundmulti} we obtain a subspace code $\mC \subseteq
\mG_q(5,10)$ of minimum distance $\delta=3$ with
$$|\mC|=q^{15}+q^6+2q^2+q+1.$$
For $q \ge 3$ this is the subspace code of parameters
$(n,k,\delta)=(10,5,3)$ with largest known cardinality.
\end{example}

Let us briefly recall the definition of \textbf{lexicode}.
The vectors of $\F_2^n$ can be lexicographically ordered as follows. Let $v,w
\in \F_2^n$, $v \neq w$, and let $i:= \min\{ j \ | \ v_j \neq w_j \}$. We say
that $w \prec v$ if $v_i=1$.  Given a binary vector $v \in \F_2^n$ of weight $k$, the
constant weight lexicode originated by $v$ of minimum distance $2\delta$ is
constructed through iterated steps as follows. Start with $C=\{v\}$. List the
elements of $\F_2^n$ in decreasing lexicographic order. At each step add to $C$ the first
vector of the list of weight $k$ and Hamming distance at least $2\delta$
from all the elements of $C$, until there is no such vector left.

According to Theorem \ref{boundmulti}, the cardinality of a subspace code
obtained
through the multilevel construction depends on the choice of the
binary constant weight code. Since lexicodes are known to have large
cardinality
among constant weight binary codes with the same parameters, T. Etzion and
N. Silberstein suggest in \cite{ref1} to use the lexicode originated by the
vector $$\underbrace{1 \cdots 1}_k\underbrace{0 \cdots 0}_{n-k}$$
in the multilevel construction.  
However this choice is not always optimal, as we show in the
following example.

\begin{example} \label{exA}
Let $n:=10$, $k:=5$, $\delta=3$. Consider the  binary constant weight lexicode
$$C':=\{ 1111100000,\  1100011100, \ 1010010011, \ 0101001011, \ 00010101110,
 \ 0001110101\}.$$
Let $w_1,...,w_6$ be the elements of $C'$ in the displayed order. The graphical
representations of the $\mF_{w_i}$'s are as follows.

$$\begin{array}{ccccc} \mbox{\textbullet} & \mbox{\textbullet} &
\mbox{\textbullet} & \mbox{\textbullet} & \mbox{\textbullet}\\

\mbox{\textbullet} & \mbox{\textbullet} &
\mbox{\textbullet} & \mbox{\textbullet} & \mbox{\textbullet}\\

\mbox{\textbullet} & \mbox{\textbullet} &
\mbox{\textbullet} & \mbox{\textbullet} & \mbox{\textbullet}\\

\mbox{\textbullet} & \mbox{\textbullet} &
\mbox{\textbullet} & \mbox{\textbullet} & \mbox{\textbullet}\\

\mbox{\textbullet} & \mbox{\textbullet} &
\mbox{\textbullet} & \mbox{\textbullet} & \mbox{\textbullet}
 \end{array} \ \ \ \ 
 \begin{array}{ccccc} \mbox{\textbullet} & \mbox{\textbullet} &
\mbox{\textbullet} & \mbox{\textbullet} & \mbox{\textbullet}\\

\mbox{\textbullet} & \mbox{\textbullet} &
\mbox{\textbullet} & \mbox{\textbullet} & \mbox{\textbullet}\\

 & & & \mbox{\textbullet} & \mbox{\textbullet}\\

& & & \mbox{\textbullet} & \mbox{\textbullet}\\

& & & \mbox{\textbullet} & \mbox{\textbullet}
 \end{array} \ \ \ \  
 \begin{array}{ccccc} \mbox{\textbullet} & \mbox{\textbullet} &
\mbox{\textbullet} & \mbox{\textbullet} & \mbox{\textbullet}\\

 & \mbox{\textbullet} & \mbox{\textbullet} &
\mbox{\textbullet} & \mbox{\textbullet} \\

 & & & \mbox{\textbullet} & \mbox{\textbullet} 
 \end{array} \ \ \ \ 
 \begin{array}{cccc} \mbox{\textbullet} & \mbox{\textbullet} &
\mbox{\textbullet} & \mbox{\textbullet} \\

 & \mbox{\textbullet} & \mbox{\textbullet} &
\mbox{\textbullet}  \\

 & & &  \mbox{\textbullet} 
 \end{array} \ \ \ \ 
  \begin{array}{ccc}  
\mbox{\textbullet} & \mbox{\textbullet} & \mbox{\textbullet}\\

  &
\mbox{\textbullet} & \mbox{\textbullet} \\

  &  & \mbox{\textbullet} \\
 
  &  & \mbox{\textbullet} \\
 
  &  & \mbox{\textbullet} \\ 
 \end{array} \ \ \ \ 
 \begin{array}{cc}   \mbox{\textbullet} & \mbox{\textbullet}\\
\mbox{\textbullet} & \mbox{\textbullet} \\
\mbox{\textbullet} & \mbox{\textbullet} \\
 & \mbox{\textbullet} 
 \end{array} 
  $$
One can easily check that
$$T_3(\mF_{w_1}) = 15, \ \ \  T_3(\mF_{w_2}) = 6, \ \ \ T_3(\mF_{w_3}) =
2, \ \ \ T_3(\mF_{w_4}) = 1, \ \ \ T_3(\mF_{w_5}) = 1, \ \ \ T_3(\mF_{w_6})
=0.$$
Therefore, by Theorem \ref{boundmulti} and Theorem \ref{tbound}, choosing $C'$
as the pivot
code produces a subspace code of cardinality at most
$$q^{15}+q^6+q^2+2q+1.$$
However, the binary code $C$ considered in Example \ref{exB} produces a
subspace code with the same parameters $n,k,\delta$ and larger cardinality, for
all values of $q$.
\end{example}

Theorem \ref{teo1} allows us to give a lower bound the cardinality of subspace codes
obtained from given pivot vectors through the multilevel
construction.

\begin{theorem}
Fix integers $n,k,\delta$ with $2 \le \delta \le k \le n/2$. Let $D
\subseteq \F_2^n$ be a code of constant weight $k$ and minimum distance at least
$2\delta$. For $v\in D$ let $p_i(v)$ denote the position of the $i$-th one of $v$.
Let $$D':=\{v\in D\mid p_{\delta-1}(v)\leq n-2k+\delta-1\}$$ and 
$$D''=\{v\in D\mid p_i(v)\leq n-k-\delta+2i-1,\: i=1,\ldots,\delta\}.$$ Then
there
exists a subspace code $\mC' \subseteq \mG_q(k,n)$ of injection distance at
least $\delta$ and $$|\mC'| = \sum_{v \in D'} q^{T_\delta(\mF_v)}+\sum_{v \in D''\setminus D'} q+|D\setminus D''|.$$ Moreover
any subspace code $\mC$ obtained from $D$ through the multilevel
construction has $$|\mC| \le  \sum_{v \in D''}q^{T_\delta(\mF_v)}+|D\setminus D''|\leq |\mC'|+ \mathcal{O} \left(
q^{(k-\delta+1)(k-1)} \right)$$
asymptotically in $q$. \end{theorem}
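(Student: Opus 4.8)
The idea is to produce the code $\mC'$ by a careful application of the multilevel construction (Theorem~\ref{boundmulti}), using the pivot code $D$ but choosing the nested rank-metric codes $S(v)$ as large as possible, and then to estimate everything in sight using Theorem~\ref{teo1} and Theorem~\ref{tbound}. First I would dictionary-translate the hypotheses on the positions $p_i(v)$ into statements about the Ferrers diagram $\mF_v=[r_1,\ldots,r_k]$ via $r_i=n-k-p_i(v)+i$ (Notation~\ref{assoc}). The condition $p_{\delta-1}(v)\le n-2k+\delta-1$ is exactly $r_{\delta-1}\ge k$, which is the hypothesis of Theorem~\ref{teo1}: for $v\in D'$ we get a $\underline{\delta}$-space $S(v)\subseteq\F_q[\mF_v]$ of dimension exactly $T_\delta(\mF_v)$. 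Similarly $p_i(v)\le n-k-\delta+2i-1$ unwinds to $r_i\ge m-i+1$ where $m=n-k$; I would check that for $v\in D''$ this forces $T_\delta(\mF_v)\ge 1$ (so a $\underline{\delta}$-space of dimension $1$ exists, giving the contribution $q$), while for $v\in D\setminus D''$ there is at least one choice of $i$ for which removing the top $i$ rows and rightmost $\delta-i-1$ columns exhausts the diagram, so $T_\delta(\mF_v)=0$ and we may take $S(v)=\{0\}$, contributing $1$ via the single subspace $\operatorname{rowsp}N(v,0)$. Summing $q^{\dim S(v)}$ over $v\in D$ as in Theorem~\ref{boundmulti} then gives the claimed $|\mC'|$.

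For the upper bound on an arbitrary $\mC$ obtained from $D$, I would argue that for \emph{any} choice of $\underline{\delta}$-spaces $S(v)$, Theorem~\ref{boundmulti} gives $|\mC|=\sum_{v\in D}q^{\dim S(v)}$ and Theorem~\ref{tbound} bounds each exponent by $T_\delta(\mF_v)$; moreover for $v\in D\setminus D''$ one has $T_\delta(\mF_v)=0$, so those terms contribute at most $1$ each. This yields $|\mC|\le\sum_{v\in D''}q^{T_\delta(\mF_v)}+|D\setminus D''|$. For the comparison with $|\mC'|$, split $D''=D'\sqcup(D''\setminus D')$: on $D'$ the exponents agree (both equal $T_\delta(\mF_v)$, achieved by $\mC'$), so the discrepancy comes only from $v\in D''\setminus D'$, where $\mC'$ contributes $q$ and $\mC$ contributes at most $q^{T_\delta(\mF_v)}$.

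The remaining point is the asymptotic estimate $\sum_{v\in D''\setminus D'}q^{T_\delta(\mF_v)}=\mathcal{O}(q^{(k-\delta+1)(k-1)})$. Here the key observation is that $T_\delta(\mF_v)\le|\mF_v|\le$ the full $k\times(n-k)$ rectangle is too crude; instead, for $v\in D''\setminus D'$ one has $r_{\delta-1}<k$ but still $r_i\ge m-i+1$ for $i\le\delta-1$ — wait, more carefully, I would bound $T_\delta(\mF_v)\le T_\delta(\mF_v,\delta-1)$, the cardinality after deleting the top $\delta-1$ rows, which is $\sum_{i=\delta}^k r_i$. Since the bottom $k-\delta+1$ rows each have length at most... the relevant bound is that each of these rows has length at most $r_{\delta-1}\le k-1$ (because $v\notin D'$ means $r_{\delta-1}\le k-1$), so $\sum_{i=\delta}^k r_i\le (k-\delta+1)(k-1)$, whence $q^{T_\delta(\mF_v)}\le q^{(k-\delta+1)(k-1)}$. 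As $|D''\setminus D'|$ is a constant independent of $q$, the whole sum is $\mathcal{O}(q^{(k-\delta+1)(k-1)})$, and $|D\setminus D''|$ is also a $q$-independent constant, absorbed into the same error term. The main obstacle I anticipate is purely bookkeeping: getting the translations between $p_i(v)$ and $r_i$ exactly right (including the off-by-one in Notation~\ref{assoc} and in the definition of $T_\delta$) so that $D'$, $D''$ correspond precisely to ``$T_\delta(\mF_v)$ is computed by Theorem~\ref{teo1}'' and ``$T_\delta(\mF_v)\ge 1$'' respectively; once those dictionary entries are nailed down, the rest is assembling Theorems~\ref{boundmulti}, \ref{teo1}, and \ref{tbound}.
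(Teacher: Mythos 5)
Your overall strategy is exactly the paper's: translate the position conditions into row lengths via $r_i(v)=n-k-p_i(v)+i$, apply Theorem~\ref{teo1} on $D'$, show $T_\delta(\mF_v)=0$ off $D''$ to get the upper bound from Theorem~\ref{tbound}, and bound $T_\delta(\mF_v)\le\sum_{i=\delta}^k r_i(v)\le(k-\delta+1)(k-1)$ on $D''\setminus D'$ for the asymptotics. Two local points need repair, one of which is a genuine logical gap.

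First, the translation of the $D''$ condition is wrong as written: $p_i(v)\le n-k-\delta+2i-1$ unwinds to $r_i(v)\ge\delta-i+1$ for $i=1,\ldots,\delta$, not to $r_i\ge m-i+1$ with $m=n-k$ (the latter is the much stronger hypothesis of Corollary~\ref{teo4} and is not implied by membership in $D''$). Second, and more seriously, you justify the contribution of $q$ for each $v\in D''\setminus D'$ by saying that $T_\delta(\mF_v)\ge 1$, ``so a $\underline{\delta}$-space of dimension $1$ exists.'' That inference is invalid: $T_\delta$ is only an \emph{upper} bound on $\mbox{MaxDim}_{\underline{\delta}}$ (Theorem~\ref{tbound}), and deducing a matching lower bound from $T_\delta\ge 1$ is precisely an instance of Conjecture~\ref{tconj}, which is what one is trying to avoid assuming. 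The correct (and easy) fix, which is what the paper does, is to use the corrected translation: $r_i(v)\ge\delta-i+1$ for $i\le\delta$ means $\mF_v\supseteq[\delta,\delta-1,\ldots,1,0,\ldots,0]$, and this staircase visibly supports an explicit rank-$\delta$ matrix (a right-justified $\delta\times\delta$ identity block), giving a one-dimensional $\underline{\delta}$-space and hence the summand $q$. With these two corrections the rest of your argument (the $T_\delta(\mF_v)=0$ computation off $D''$, the upper bound on arbitrary $\mC$, and the $\mathcal{O}(q^{(k-\delta+1)(k-1)})$ estimate using $r_{\delta-1}(v)\le k-1$ for $v\notin D'$) goes through as in the paper.
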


\begin{proof}
For $v \in D$ we denote by $r_i(v)$ the cardinality of the
$i$-th row of $\mF_v$ for $i=1,...,k$. Fix any $v \in D'$. As in Notation
\ref{assoc}, the cardinality of the $i$-th row of $\mF_v$ is
$r_i(v)=n-k-p_i(v)+i$. 

Let $v\not\in D''$, then there exists $i\in\{1,\ldots,\delta\}$ 
such that $r_i(v)\leq n-k-(n-k-\delta+2i)+i=\delta-i$.
Therefore, $$T_{\delta}(\mF_v,i-1)=\sum_{u=i}^k \max\{r_u(v)-(\delta-u+1),0\}=0=T_{\delta}(\mF_v).$$
This proves that any subspace code $\mC$ obtained from $D$ through the multilevel
construction has $$|\mC| \le  \sum_{v \in D''}q^{T_\delta(\mF_v)}+|D\setminus D''|.$$

Let now $v\in D'$. Since $p_{\delta-1}(v) \le n-2k+\delta-1$, we have $r_{\delta-1}(v) \ge k$. 
Combining Theorem \ref{teo1} and the multilevel construction using the vectors of $D'$, 
we construct a code of cardinality $\sum_{v \in D'} q^{T_\delta(\mF_v)}$ and minimum 
distance at least $\delta$. For any $v\in D''$, the associated Ferrers diagram 
$\mF_v\supseteq [\delta,\delta-1,\ldots,1,0,\ldots,0]$ by the definition of $D''$. 
Hence we have at least one matrix of rank $\delta$ and shape $\mF_v$, namely the $k\times (n-k)$ 
matrix containing a top-right justified $\delta\times\delta$ identity matrix and zeroes everywhere else.
Adding the lift of these codewords to the previous code through the multilevel construction, we construct a code $\mC'$ with 
$$|\mC'| = \sum_{v \in D'} q^{T_\delta(\mF_v)}+\sum_{v \in D''\setminus D'} q+|D\setminus D''|$$
as claimed.

It follows from the previous argument that the cardinality of a code $\mC$ obtained from $D$ through the multilevel
construction can be increased only by producing larger linear spaces of matrices of rank at least $\delta$ and support 
contained in $\mF_v$ for $v\in D''\setminus D'$.
Observe that for any such $v$ we have $r_{\delta-1}(v)\leq k-1$, hence $r_i(v)\leq k-1$ for $i=\delta,...,k$. 
Hence $$T_\delta(\mF_v) \le \sum_{i=\delta}^k
r_i(v)\le (k-\delta+1)(k-1).$$ Since $|D\setminus D''|$ and $|D''\setminus D'|$ are constant in $k$, we have
$$|\mC|-|\mC'|\in \mathcal{O} \left(q^{(k-\delta+1)(k-1)}\right).$$
\end{proof}

\begin{example}
 In Table \ref{tabella} we give some cardinalities of subspace
codes which we find combining the results of Section \ref{varicasi}
with the multilevel construction of \cite{ref1} as shown in Example
\ref{exB}. For $q \ge 3$ and the given values of $k$, $n$ and $\delta$,
the codes have the largest known size.

\begin{table}[h!]
 \begin{tabular}{|c|c|c|c|c|}
 \hline
$n$ & $k$ & $\delta$ & size  \\
\hline
\hline
$10$ & $5$ & $3$ & $q^{15}+q^6+2q^2+q+1$  \\
\hline
$11$ & $5$ & $3$ & $q^{18}+q^9+q^6+q^4+4q^3+3q^2$  \\
\hline
$14$ & $4$ & $3$ & $q^{20}+q^{14}+q^{10}+q^9+q^8+2(q^6+q^5+q^4)+q^3+q^2$  \\
\hline
$14$ & $5$ & $4$ & $q^{18}+q^{10}+q^3+1$  \\
\hline
$15$ & $6$ & $5$ & $q^{18}+q^5+1$  \\

\hline

\end{tabular}
\label{tabella}
\caption{Some large subspace codes in $\mG_q(k,n)$ with minimum
injection distance at least $\delta$.}
\end{table}

\end{example}

\end{document}